\tikzset{%
 shaded/.style={draw, shape=circle, fill=black!35, inner sep=1.4pt},
 unshaded/.style={draw, shape=circle, fill=white, inner sep=1.4pt},
 quasi/.style={draw, shape=rectangle, rounded corners=3pt, fill=white, inner sep=2.5pt, minimum height=14.5pt},
 blob/.style={draw, shape=rectangle, rounded corners=12pt, thin, densely dotted},
 arrow/.style={->, thin, >=latex, shorten >=2.5pt, shorten <=2.5pt},
 order/.style={thin},
 curvy/.style={thin, looseness=1.2, bend angle=70},
 fatcurvy/.style={thin, looseness=1.7, bend angle=75},
 label/.style={shape=rectangle, inner sep=6pt},
 auto}
\begin{document}

\title{Pregroup representable expansions of residuated lattices}
\titlerunning{Pregroup representable expansions of residuated lattices}
%
\author{Andrew Craig \inst{1,2}\orcidID{0000-0002-4787-3760} 
 \and
Claudette Robinson\inst{1}\orcidID{0000-0001-7789-4880}}
\authorrunning{A. Craig, C. Robinson}
%
\institute{Department of Mathematics and Applied Mathematics, University of Johannesburg, Auckland Park 2006, South Africa \\
\email{\{acraig,claudetter\}@uj.ac.za}
\and
National Institute for Theoretical and Computational Sciences (NITheCS), Johannesburg, South Africa
}

\maketitle

\begin{abstract}
Group representable relation algebras play an important role in the study of representable relation algebras.
The class of distributive involutive FL-algebras (DInFL-algebras) generalises relation algebras, as well as  Sugihara monoids and MV-algebras.  
We construct DInFL-algebras  from pregroups and show that they can be represented as algebras of binary relations. 
Even for finite pregroups we obtain relational representations of DInFL-algebras with non-Boolean lattice reducts. 
If the pregroup is enriched with a particular unary order-reversing operation, then our construction yields representation results for distributive quasi relation algebras. 
\\

\keywords{pregroups \and distributive involutive FL-algebras \and quasi relation algebras  \and involutive pomonoids \and representability}
\end{abstract}

\section{Introduction}

A well-known result 
attributed to McKinsey (cf.~\cite{JT48}) says that the complex algebra of a group is a relation algebra. Moreover, such a relation algebra is representable as an algebra of binary relations~\cite{JT52}. Here we seek to give relational representations for distributive involutive FL-algebras (DInFL-algebras) and distributive quasi relation algebras (DqRAs).  DInFL-algebras are a class of residuated lattice expansions that include relation algebras, Sugihara monoids, and MV-algebras. We give precise definitions of DInFL-algebras and DqRAs in 
Section~\ref{sec:preliminaries}.

As a generalisation of the well-known group representable relation algebras, we show that DInFL-algebras constructed as the up-set algebra of a pregroup are \emph{representable} as algebras of binary relations in the sense of~\cite{RDqRA25} (see  Section~\ref{sec:representable}). 
Group representable relation algebras have been studied extensively and the results obtained there have contributed significantly to the understanding of representable relation algebras. Our generalisation provides a platform for further investigation of representations of DInFL-algebras. 

Pregroups were introduced by Lambek~\cite{Lam99} to provide an algebraic model of the grammar of natural language. They are both a  generalisation of groups (see Example~\ref{ex:group_is_pregroup}) and special partially ordered monoids. Thus they are well-suited for our purposes, as the collection of up-sets of a partially ordered monoids can be given the structure of a distributive residuated lattice (Proposition~\ref{prop:RLUPP}). 

All finite pregroups are  discretely ordered, and hence their up-set algebras have a Boolean lattice reduct. However, these up-set algebras have interesting subalgebras and hence we are able to obtain many examples of pregroup representable  DInFL-algebras (see Example~\ref{ex:represenatble_via_Z7} and Section~\ref{sec:apps}).

We remark that DInFL-algebras are term equivalent to distributive involutive semirings (where distributivity must be expressed without the meet operation)~\cite{JV21}. Hence any results obtained here can also be applied in that setting. 

In the next section we recall definitions of the algebras under consideration: DInFL-algebras and DqRAs. We then examine the partially ordered monoids (pomonoids) that will be used to construct the algebras: ipo-monoids and pregroups. In Section~\ref{sec:embedDRL} we consider distributive residuated lattices (DRLs) which are  built as the up-sets of pomonoids.  Proposition~\ref{prop:CondW}  characterises when these concrete DRLs can be embedded into algebras of binary relations. The embedding $\sigma$  is an adaptation of the map used for group representable relation algebras.

Section~\ref{sec:representable} recalls the definition of representability for DInFL-algebras and DqRAs from our recent paper~\cite{RDqRA25}. In Section~\ref{sec:Rep_DInFL} we  construct DInFL-algebras from ipo-monoids and then show how this leads to representations as algebras of binary relations. The pregroup properties are essential for the embedding to preserve the linear negations. 

Ortho pregroups are introduced in  Section~\ref{sec:DqRA-ortho} in order to prove representability results for DqRAs. 
In Section~\ref{sec:apps} we give an example of a collection of DqRAs for which our method gives representations, and lastly in Section~\ref{sec:future} we point to some potential directions for further research.

\section{Preliminaries}\label{sec:preliminaries}

\subsection{Residuated lattice expansions}\label{sec:RLEs}

We begin by recalling the definitions of the algebras that we study in this paper. They can be viewed either as generalisations of relation algebras or as expansions of residuated lattices~\cite{GJ13}. 

A {\em residuated lattice} (RL) is an algebra 
$\mathbf{A}=\langle A,\wedge,\vee, \cdot,1, \backslash,/\rangle$ such that 
$\langle A,\cdot,1\rangle$ is a monoid, $\langle A,\wedge,\vee\rangle$
is
a lattice and the monoid operation $\cdot$ has 
residuals $\backslash$ and $/$, i.e., for all $a,b,c\in A$,
\[a\cdot b\leqslant c\quad \textnormal{ iff } \quad 
a\leqslant c/b\quad \textnormal{ iff } \quad
b\leqslant a\backslash c.\]
For all monoid operations in this paper, we write $ab$ for $a\cdot b$ whenever no confusion can arise. An element $a$ in $A$ is \emph{idempotent} if $a^2 = a$, and a residuated lattice $\mathbf{A}$ is \emph{idempotent} if all its elements are  idempotent. 
We say that a residuated lattice $\mathbf A$  is 
\emph{distributive} if its underlying lattice is distributive. 


When a residuated lattice is expanded with a constant $0$, the algebra  
$\mathbf{A}=\langle A,\wedge,\vee,\cdot,1, \backslash,/,0\rangle$
is called a {\em Full Lambek }(\emph{FL}-)\emph{algebra} (cf.~\cite[Chapter~2.2]{GJKO}).  
Note that no additional properties are assumed about the constant $0$. 
On an FL-algebra, two  {\em linear negations},
are defined in terms of the residuals and $0$ as follows:  ${\sim} a=a\backslash 0$ and ${-}a=0/a$.
It follows that ${\sim}1 = 1\backslash 0 = 0= 0/1=-1$.
By residuation, 
${\sim} (a\vee b)={\sim}a \wedge {\sim}b$ and ${-}(a\vee b)=
{-a}\wedge{-}b$ for all $a,b\in A$.
An FL-algebra $\mathbf{A}$ is \emph{cyclic} if $-a = {\sim} a$ for all $a \in A$. Clearly an FL-algebra $\mathbf{A}$ that is  commutative will be cyclic. 
The reader is referred to~\cite{GJKO} 
for more information.

If an FL-algbra $\mathbf{A}$ satisfies the condition 
\[\textsf{(In)}:\quad   {\sim}{-}a=a={-}{\sim}a,\text{ for all }a\in A,\]
then it is called an {\em involutive Full Lambek {\normalfont(}InFL-{\normalfont)}algebra}.
These algebras satisfy 
$a\leqslant b \textnormal{ iff } a\,\cdot({\sim} b)\leqslant -1 \textnormal{ iff } (-b)\cdot a\leqslant-1$,
for all $a,b\in A$. Further, since both $-$ and $\sim$ are order-reversing, 
we get that 
$-$ and $\sim$ are dual lattice isomorphisms in an InFL-algebra. 


Galatos and Jipsen~\cite[Lemma 2.2]{GJ13} showed that 
an InFL-algebra is term-equivalent to an algebra $\mathbf{A}=\langle A, \wedge,\vee,\cdot,1, -, {\sim}\rangle$ such that $\langle A, \wedge,\vee\rangle$ is a lattice, $\langle A, \cdot, 1\rangle$ is a monoid, and for all $a, b, c \in A$, we have
\begin{equation}
a\cdot b\leqslant c\quad \textnormal{ iff } \quad 
a\leqslant {-}\left(b\cdot {\sim}c\right)\quad \textnormal{ iff } \quad 
b\leqslant {\sim}\left(-c\cdot a\right).\label{eq:InFL}
\end{equation}

where $0 = -1 ={\sim}1$. 
If needed, the residuals can be expressed in terms of $\cdot$ and the linear negations as follows: $ c/b = -\left(b\cdot {\sim}c\right)$ and 
$    a\backslash c = {\sim}\left(-c\cdot a\right)$.

An InFL-algebra $\mathbf{A}$ can be expanded with an additional
unary operation ${\neg}:A\to A$ 
to form an {\em InFL$'$-algebra} 
$\mathbf{A}=\langle A, \wedge,\vee,\cdot,1, -, \sim,{\neg}\rangle$
such that $\neg\neg a=a$ for all $a\in A$. 
The involution is sometimes denoted by $'$, but 
we will use $\neg$.

If an InFL$'$-algebra $\mathbf{A}$
 satisfies the De Morgan law \textsf{(Dm)} $\neg (a\vee b)=\neg a\wedge \neg b$, for all $a,b,\in A$, 
then $\mathbf{A}$ is called a {\em DmInFL$'$-algebra}.  
A {\em quasi relation algebra} (qRA) is a DmInFL$'$-algebra 
$\mathbf{A}=\langle A,\wedge,\vee,\cdot,1,-,{\sim},{\neg}\rangle$
that satisfies
\[
\textsf{(Dp)}:\quad\neg (a\cdot b)=
{\sim}(-\neg b \cdot -\neg a).
\]
The abbreviation 
\textsf{(Dp)} stands for
\emph{De Morgan product}.

If $\mathbf{A}$ is a qRA, then it can be shown that $\neg 1 = - 1= {\sim}1 = 0$. 
An equivalent definition of a qRA is an FL$'$-algebra (i.e. an FL-algebra with an involutive unary operation $\neg$) that satisfies \textsf{(Dm)}, \textsf{(Dp)} and  \textsf{(Di)}: $\neg({\sim} a)=-(\neg a)$. 

Finally, a {\em distributive quasi relation
algebra} (DqRA) is a quasi relation algebra  $\mathbf{A}=\langle A,\wedge,\vee,\cdot, 1, -, {\sim}, {\neg}\rangle$
where the underlying lattice $\langle A,\wedge,\vee\rangle$
is distributive. 

The following example shows explicitly how 
 (distributive) InFL-algebras and  (distributive) qRAs are generalisations of relations algebras. 
\begin{example}\label{ex:relation_algebras}
Let $\langle A,\wedge, \vee, \top, \bot,', \cdot, 1,  ^\smallsmile\rangle$ be a relation algebra. Define $-a={\sim}a = (a')^\smallsmile$ and $\neg a = a'$ for all $a \in A$. Then $\langle A,\wedge, \vee, \cdot, 1, -, {\sim}\rangle$ is a cyclic DInFL-algebra and $\langle A,\wedge, \vee, \cdot, 1, -, {\sim}, \neg\rangle$ is a cyclic DqRA. 
\end{example}

For a full list of finite 
DInFL-algebras and DqRAs up to size 8, we  refer the reader to~\cite{CJR-DqRAs}. We highlight the class of algebras below as the methods in this paper will in Section~\ref{sec:apps} show them to be representable (see Section~\ref{sec:representable}). 

\begin{example}\label{ex:represenatble_via_Z7}
Let $n \in \omega$ with $n \geqslant 3$. If $n = 2k$ for $k\geqslant 2$, set $$A_n = \{a_{-k}, a_{-k+1}, \ldots,a_{-1}, a_{1}, a_{2}, \ldots, a_{k}, b_{-k}, b_{-k+1}, \ldots,b_{-1}, b_{1}, b_{2}, \ldots, b_{k}\},$$
and if $n = 2k +1$ for $k \geqslant 1$, set
$$A_n = \{a_{-k},  \ldots,a_{-1}, a_0, a_{1}, a_{2}, \ldots, a_{k}, b_{-k}, \ldots,b_{-1}, b_0,  b_{1}, b_{2}, \ldots, b_{k}\}.$$
For $a_i, a_j \in A_n$, define $a_i\vee a_j = a_{\textnormal{max}\{i, j\}}$ and $a_i\wedge a_j = a_{\textnormal{min}\{i, j\}}$. For $b_i, b_j \in A_n$,  define $b_i\vee b_j = b_{\textnormal{max}\{i, j\}}$ and $b_i\wedge b_j = b_{\textnormal{min}\{i, j\}}$. For $a_i, b_j \in A_n$, define 
\vspace{-0.75cm}
\begin{multicols}{2}
\begin{enumerate}[]   
\item $$a_i \vee b_j= b_j \vee a_i =  \begin{cases}
a_j & \text{if }\, j > i \\ 
a_i & \text{if }\, j \leqslant i 
\end{cases}$$ 
\item
$$a_i \wedge b_j = b_j \wedge a_i =  \begin{cases}
b_i & \text{if }\, j > i \\ 
b_j & \text{if }\, j \leqslant i. 
\end{cases}$$
\end{enumerate}
\end{multicols}
\noindent
It is straightforward to check that $\langle A, \wedge, \vee\rangle$ is a lattice.

For $a_i, a_j, b_i, b_j \in A_n$, define 
\vspace{-0.75cm}
\begin{multicols}{2}
\begin{enumerate}[]  
\item $$a_i \cdot a_j =  \begin{cases}
a_i & \text{if }\, j = -k \\ 
a_j & \text{if }\, i =-k\\
a_k  & \text{if }\, i \neq -k \text{ and } j \neq -k.
\end{cases}$$
\item 
$$b_i \cdot b_j = 
\begin{cases}
b_{-k} & \text{if }\, i = -k \textnormal{ or } j = -k\\
b_k & \text{if }\, -k < i, j \leqslant 0 \\
a_k & \text{if }\, 0 < i \leqslant k \textnormal{ or } 0 < j \leqslant k.
\end{cases}$$
\end{enumerate}
\end{multicols}
\noindent
and
$$a_i\cdot  b_j = b_j \cdot a_i =
\begin{cases}
b_j & \text{if }\, i= -k
\textnormal{ or } j = -k\\
b_k & \text{if }\, -k < i \leqslant 0 \textnormal{ or } -k < j \leqslant 0 \\ 
a_k & \text{if }\,  0 < i \leqslant k \textnormal{ and } 0 < j \leqslant k.
\end{cases}$$
If we set $1 = a_{-k}$, then $\langle A_n, \cdot, 1\rangle$ is a monoid. 

Finally, for $a_i \in A_n$, define $-a_i = {\sim}a_i = \neg a_i = b_{-i}$, and for $b_j \in A_n$, define $-b_j = {\sim}b_j = \neg b_j = a_{-j}$. It can then be shown that $\mathbf{A}_n=\langle A_n,\wedge, \vee, \cdot, 1, -, {\sim}, \neg\rangle$ is a (cyclic) DqRA. 

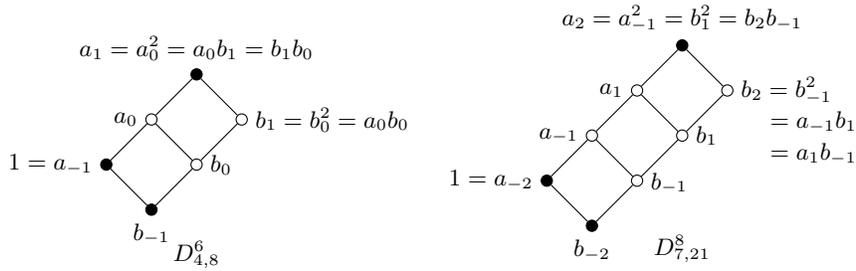
\begin{figure}
\begin{tikzpicture}
\begin{scope}[xshift=0cm,scale=0.6]
\node[draw,circle,inner sep=1.5pt,fill] (bot) at (-1,0) {};
\node[draw,circle,inner sep=1.5pt,fill] (1) at (-2,1) {};
\node[draw,circle,inner sep=1.5pt] (b) at (0,1) {};
\node[draw,circle,inner sep=1.5pt] (a) at (-1,2) {};
\node[draw,circle,inner sep=1.5pt] (0) at (1,2) {};
\node[draw,circle,inner sep=1.5pt,fill] (top) at (0,3) {};
\path [-] (bot) edge node {} (1);
\path [-] (bot) edge node {} (b);
\path [-] (1) edge node {} (a);
\path [-] (b) edge node {} (a);
\path [-] (b) edge node {} (0);
\path [-] (0) edge node {} (top);
\path [-] (a) edge node {} (top);   
\node[label,anchor=south,yshift=-1pt] at (top) {$a_1 = a_0^2=a_0b_1=b_1b_0$};
\node[label,anchor=east,xshift=1pt] at (a) {$a_0$};
\node[label,anchor=east,xshift=1pt] at (1) {$1=a_{-1}$};
\node[label,anchor=west,xshift=-1pt] at (0) {$b_1=b_0^2=a_0b_0$};
\node[label,anchor=west,xshift=-1pt] at (b) {$b_0$};
\node[label,anchor=north,yshift=1pt] at (bot) {$b_{-1}$};
\node[] at (0,-1) {$D^6_{4,8}$};
\end{scope} 
\end{tikzpicture}
\begin{tikzpicture}
\begin{scope}[xshift=0cm,scale=0.6]
\node[draw,circle,inner sep=1.5pt] (b-1) at (0,-1) {};
\node[draw,circle,inner sep=1.5pt,fill] (bot) at (-1,-2) {};
\node[draw,circle,inner sep=1.5pt,fill] (1) at (-2,-1) {};
\node[draw,circle,inner sep=1.5pt] (b1) at (1,0) {};
\node[draw,circle,inner sep=1.5pt] (0) at (2,1) {};
\node[draw,circle,inner sep=1.5pt] (a-1) at (-1,0) {};
\node[draw,circle,inner sep=1.5pt] (a1) at (0,1) {};
\node[draw,circle,inner sep=1.5pt,fill] (top) at (1,2) {};
\path [-] (bot) edge node {} (1);
\path [-] (bot) edge node {} (b-1);
\path [-] (1) edge node {} (a-1);
\path [-] (b-1) edge node {} (a-1);
\path [-] (b-1) edge node {} (b1);
\path [-] (b1) edge node {} (0);
\path [-] (b1) edge node {} (a1);
\path [-] (0) edge node {} (top);
\path [-] (a-1) edge node {} (a1);
\path [-] (a1) edge node {} (top);   
\node[label,anchor=south,yshift=-1pt] at (top) {$a_2 = a_{-1}^2=b_1^2=b_2b_{-1}$};
\node[label,anchor=east,xshift=1pt] at (a1) {$a_1$};
\node[label,anchor=east,xshift=1pt] at (a-1) {$a_{-1}$};
\node[label,anchor=east,xshift=1pt] at (1) {$1=a_{-2}$};
\node[label,anchor=west,xshift=-1pt] at (0) {$b_2=b_{-1}^2$};
\node[label,anchor=west,yshift=-12pt,xshift=10pt] at (0) {$=a_{-1}b_1$};
\node[label,anchor=west,yshift=-24pt,xshift=10pt] at (0) {$=a_1b_{-1}$};
\node[label,anchor=west,xshift=-1pt] at (b1) {$b_1$};
\node[label,anchor=west,xshift=-1pt,yshift=-1pt] at (b-1) {$b_{-1}$};
\node[label,anchor=north,yshift=1pt] at (bot) {$b_{-2}$};
\node[] at (1,-2.5) {$D^8_{7,21}$};
\end{scope}
\end{tikzpicture}\\
\caption{The DqRAs $\mathbf{A}_3$ and $\mathbf{A}_4$. They have the identifiers $D^6_{4,8}$ and $D^8_{7,21}$
(see~\cite{CJR-DqRAs}). The black nodes denote idempotent elements.}\label{fig:An_for_n=3,4}
\end{figure}
\end{example}

\subsection{Involutive Partially Ordered Monoids}\label{sub:Ipo-monoid}

A \emph{partially ordered monoid} \normalfont{(}\emph{pomonoid}\normalfont{)} is a structure of the form $\mathbf{P} = \langle P, \leqslant, \cdot, 1\rangle$ such that $\langle P, \leqslant\rangle$ is a poset, $\langle P, \cdot, 1\rangle$ is a monoid, and for all $x, y \in P$, if $x \leqslant y$, then $x\cdot z \leqslant y \cdot z$ and $z\cdot x \leqslant z \cdot y$.  
Involutive partially ordered monoids have their origins in the RAMiCS community. They were introduced by Gil-Ferez et al.~\cite{GJL23} and also studied by Bonzio et al.~\cite{BGJPS24}. 


\begin{definition}{\normalfont\cite{GJL23}}\label{def:Ipo-monoid}
An \emph{involutive partially ordered monoid} \normalfont{(}or \emph{ipo-monoid}\normalfont{)} is a structure of the form $\mathbf{P} = \langle P, \leqslant, \cdot, 1, ^-, ^{\sim}\rangle$ such that $\langle P, \leqslant\rangle $ is a poset, $\langle P, \cdot, 1\rangle$ is a monoid, and the following holds for all $x, y \in P$:
\begin{equation}
x\leqslant y \quad \textnormal{ iff } \quad  x\cdot y^{\sim} \leqslant 1^- \quad \textnormal{ iff } \quad y^-\cdot x \leqslant 1^-.\label{eq:ipo}
\end{equation}
\end{definition}

We will denote $1^-$ by $0$. The unary operations $^\sim$ and $^-$ are called \emph{involutive
negations}. Given an involutive partially monoid $\mathbf{P}= \langle P, \leqslant, \cdot, 1, ^-, ^{\sim}\rangle$, we say that $\mathbf{P}$  is \emph{cyclic} if $x^\sim = x^-$ for all $x \in P$. We say $\mathbf{P}$ is \emph{commutative}
if $x \cdot y = y\cdot x$ for all $x, y \in P$. It is easy to see that if $\mathbf{P}$ is commutative, then it is cyclic.  
As mentioned before, we will sometimes omit  $\cdot$ and
write $xy$ 
for simplicity. 

The properties of the next lemma
will often be used without
referring to them explicitly.

\begin{lemma}{\normalfont\cite{GJL23}}\label{lem:properties_ipo-mon}
%
Let $\mathbf{P} = \langle P, \leqslant, \cdot, 1, ^-, ^{\sim}\rangle$ be an ipo-monoid. Then the following hold for all $x, y, z\in P$:
\begin{enumerate}[\normalfont (i)]
\item $x^{-{\sim}} = x = x^{{\sim}{-}}$,
\item $x \leqslant y$ iff $y^- \leqslant x^-$ iff $y^{\sim}\leqslant x^{\sim}$,
\item $0 = 1^{\sim}$, $0^{\sim} = 1$, $0^{-} = 1$, 
\item $xy \leqslant z$ iff $x \leqslant (y  z^{\sim})^-$ iff $y \leqslant (z^-x)^{\sim}$, and
\item $x \leqslant y$ then $xz \leqslant yz$ and $zx\leqslant zy$.
\end{enumerate}
\end{lemma}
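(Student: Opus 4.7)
The plan is to derive all five parts from the defining equivalence
\[
(\ast):\ x \leqslant y \ \iff\ xy^\sim \leqslant 0 \ \iff\ y^- x \leqslant 0,
\]
using only reflexivity of $\leqslant$ and associativity of $\cdot$. The arguments for (i) and (v) are coupled, so I would handle them in tandem before deriving (ii), (iii), (iv).

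For the ``easy halves'' of (i), specialising $(\ast)$ to $\alpha = \beta = x^-$ yields by reflexivity the product inequality $x^- x^{-\sim} \leqslant 0$; reading the second iff of $(\ast)$ backwards with $\alpha := x^{-\sim}$ and $\beta := x$ then converts this into $x^{-\sim} \leqslant x$. A dual argument starting from $\alpha = \beta = x^\sim$ and reading the first iff gives $x^{\sim -} \leqslant x$. The main obstacle is the reverse inequalities $x \leqslant x^{-\sim}$ and $x \leqslant x^{\sim -}$: by $(\ast)$, $x \leqslant x^{-\sim}$ is equivalent to $x^{-\sim -}\cdot x \leqslant 0$, and applying the easy half just proved to $x^-$ yields $x^{-\sim -} \leqslant x^-$, so combined with the always-true $x^- x \leqslant 0$ (from $x \leqslant x$ under $(\ast)$) one gets the goal \emph{if} left multiplication is monotone. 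This forces a parallel derivation of (v): monotonicity can be extracted using the tautological consequence $\alpha\beta^\sim \leqslant 0 \iff \beta^- \alpha \leqslant 0$ of the two iffs in $(\ast)$ to transport inequalities between left and right products, and the involutive identities are then established alongside by a careful chain of substitutions. I anticipate this interleaving to be the delicate step.

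With (i) and (v) in place, part (ii) is immediate: $y^- \leqslant x^-$ iff, by $(\ast)$, $y^-(x^-)^\sim \leqslant 0$, which by (i) equals $y^- x \leqslant 0$, and this is equivalent to $x \leqslant y$; the $^\sim$-case is symmetric. For (iii), applying $(\ast)$ to $1 \leqslant 1$ gives $1\cdot 1^\sim \leqslant 0$, i.e.\ $1^\sim \leqslant 0$; combined with (ii) applied to $0 = 1^- \leqslant 1^\sim$ (obtained by $(\ast)$ and (i)), this forces $0 = 1^\sim$. The remaining identities $0^\sim = 1$ and $0^- = 1$ follow by applying $^\sim$ and $^-$ to $0 = 1^\sim = 1^-$ and invoking (i).

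For (iv), use associativity to rewrite $(xy)z^\sim$ as $x(yz^\sim)$; then $(\ast)$ applied to $xy$ and $z$ yields $xy \leqslant z \iff x(yz^\sim) \leqslant 0$. By (i), $yz^\sim = \bigl((yz^\sim)^-\bigr)^\sim$, so this last inequality is $x \cdot \bigl((yz^\sim)^-\bigr)^\sim \leqslant 0$, and one more use of the first iff of $(\ast)$ gives $x \leqslant (yz^\sim)^-$. The third equivalence is obtained dually from the second iff of $(\ast)$. Part (v) then also follows cleanly from (iv) as a cross-check: from $yz \leqslant yz$ and (iv) we get $y \leqslant (z(yz)^\sim)^-$, whence $x \leqslant y$ and transitivity yield $xz \leqslant yz$ via (iv) in reverse.
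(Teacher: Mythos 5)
First, note that the paper does not prove this lemma at all: it is imported verbatim from~\cite{GJL23}, so there is no in-paper argument to compare against. Judged on its own terms, your plan is sound for (ii), (iii) and (iv) once (i) is available, and your closing derivation of (v) from (iv) via transitivity is correct. The easy halves $x^{-\sim}\leqslant x$ and $x^{\sim-}\leqslant x$ are also correctly obtained from reflexivity and the two halves of $(\ast)$.

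The genuine gap is exactly where you flag it, and your sketch for closing it does not work as described. You propose to get $x\leqslant x^{-\sim}$, i.e.\ $x^{-\sim-}x\leqslant 0$, by combining $x^{-\sim-}\leqslant x^-$ with $x^-x\leqslant 0$ using monotonicity of the product --- but in your architecture monotonicity (v) is a consequence of (iv), which needs (i), so this is circular, and ``transporting inequalities between left and right products'' via $\alpha\beta^\sim\leqslant 0\iff\beta^-\alpha\leqslant 0$ does not by itself produce monotonicity before (i) is known. The circle can be broken, but by a different move: you need the \emph{equality} $x^{-\sim-}=x^-$, not just $\leqslant$, after which $x^{-\sim-}x\leqslant 0$ follows from $x^-x\leqslant 0$ by substitution with no appeal to monotonicity. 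To get that equality, first extract from $(\ast)$ the Galois-type equivalence $z^-\leqslant y \iff z^-y^\sim\leqslant 0 \iff y^\sim\leqslant z$ (specialise the first iff to $\alpha=z^-$ and the second to $\alpha=y^\sim$); this yields antitonicity of $^-$ and $^\sim$ using only transitivity and the easy halves. Antitonicity applied to $x^{-\sim}\leqslant x$ gives $x^-\leqslant x^{-\sim-}$, while the easy half applied to $x^-$ gives $x^{-\sim-}\leqslant x^-$, whence $x^{-\sim-}=x^-$ by antisymmetry; dually $x^{\sim-\sim}=x^\sim$, and both hard halves of (i) follow. With (i) secured, your derivations of (ii)--(v) go through in the order you give them. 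As written, however, the proposal leaves the pivotal step unproved and points at a method that would not close it.
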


The following proposition gives an alternative presentation of ipo-monoids. We remark that when using Prover9/Mace4~\cite{P9M4}, the axiomatization below is faster when generating finite models. 

\begin{proposition}\label{prop:alternative_def_ipo}
A structure $\mathbf{P} = \langle P, \leqslant, \cdot, 1, ^-, ^{\sim}\rangle$ is an ipo-monoid iff $\langle P, \leqslant, \cdot, 1\rangle$ is a pomonoid and the following conditions hold for all $x,y, z \in P$:
\begin{multicols}{2}
\begin{enumerate}[\normalfont (i)]
\item $x^{-\sim} \leqslant x$ and $x^{\sim -} \leqslant x$, and 
\item $x y \leqslant z^{\sim}$ iff $zx\leqslant y^{-}$.
\end{enumerate}
\end{multicols}
\end{proposition}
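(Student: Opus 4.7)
The forward direction is straightforward. If $\mathbf{P}$ is an ipo-monoid, then $\langle P, \leqslant, \cdot, 1\rangle$ is a pomonoid by Lemma~\ref{lem:properties_ipo-mon}(v), condition~(i) is immediate from the involutivity $x^{-\sim} = x = x^{\sim -}$ of Lemma~\ref{lem:properties_ipo-mon}(i), and condition~(ii) follows by a short chain of rewrites: using the defining biconditional $a \leqslant b$ iff $b^- a \leqslant 0$ in the form $xy \leqslant z^\sim$ iff $z^{\sim -}(xy) \leqslant 0$, Lemma~\ref{lem:properties_ipo-mon}(i) reduces this to $z(xy) \leqslant 0$; associativity gives $(zx)y \leqslant 0$, and a final application of $ay \leqslant 0$ iff $a \leqslant y^-$ yields $zx \leqslant y^-$.

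The converse is more delicate because~(i) supplies only one direction of the would-be involutivity equalities; the opposite inequalities, together with both biconditionals of~(\ref{eq:ipo}), must be bootstrapped from~(ii) alone. The plan is to proceed in four carefully ordered steps, each using a specific instance of~(ii). First, setting $y=z=1$ in~(ii) gives $x \leqslant 1^\sim$ iff $x \leqslant 1^-$ for every $x \in P$, and antisymmetry of $\leqslant$ forces $1^- = 1^\sim$; denote this common value by $0$. Second, with $0$ now unambiguously defined, the tautologies $x^\sim \leqslant x^\sim$ and $x^- \leqslant x^-$ translate under suitable instances of~(ii) (with one of the three variables set to $1$) into the cancellation inequalities $x\cdot x^\sim \leqslant 0$ and $x^-\cdot x \leqslant 0$. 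Third, feeding these cancellation inequalities back into~(ii) with new choices of parameters delivers the reverse inequalities $x \leqslant x^{-\sim}$ and $x \leqslant x^{\sim -}$; combined with~(i), this upgrades to the full involutivity $x = x^{-\sim} = x^{\sim -}$. Finally, the defining biconditionals of~(\ref{eq:ipo}) are obtained by one further specialisation of~(ii) in each case, using the now-established involutivity to rewrite $y^{\sim -}$ and $y^{-\sim}$ as $y$ on the right-hand side.

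The main obstacle is the apparent circularity of the converse: conditions~(i) and~(ii) jointly imply full involutivity, yet none of the intermediate equalities are available at the outset. Pinning down $1^-=1^\sim$ is the linchpin that unlocks everything else; once $0$ is well-defined, the remaining steps unfold by order-sensitive but otherwise routine substitutions in~(ii).
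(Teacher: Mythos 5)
Your proposal is correct and follows essentially the same route as the paper's proof: the forward direction uses the same chain (multiply through, invoke $zz^{\sim}\leqslant 0$ and involutivity), and the converse bootstraps the involutivity equalities, $1^-=1^{\sim}$, and finally the biconditional~(\ref{eq:ipo}) from the same instances of~(ii). The only differences are cosmetic: you establish $1^-=1^{\sim}$ first rather than after the involutivity equalities, and you route through the cancellation inequalities $xx^{\sim}\leqslant 0$ and $x^-x\leqslant 0$ where the paper goes directly from $1\cdot x^{\sim}\leqslant x^{\sim}$ to $x\leqslant x^{\sim-}$; both orderings check out.
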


\begin{proof}
Let $\mathbf{P} = \langle P, \leqslant, \cdot, 1, ^{\sim}, ^-\rangle$ be an ipo-monoid. It follows from Lemma~\ref{lem:properties_ipo-mon}(v) that the monoid operation is order preserving in both coordinates. Item (i) above follows from Lemma~\ref{lem:properties_ipo-mon}(i). We thus have to prove that item (ii) above holds. Assume $x y \leqslant z^{\sim}$. Then $z x y \leqslant z  z^{\sim}$, and so, since $z \leqslant z$ implies $z z^{\sim} \leqslant 0$, we have $z x y \leqslant 0$. Hence, $z  x y^{-\sim} \leqslant 0$, and therefore $z x \leqslant y^{-}$ by (\ref{eq:ipo}). Conversely, assume $z x \leqslant y^{-}$. Then $z x y^{-\sim}\leqslant 0$ by (\ref{eq:ipo}), and so $z x y \leqslant 0$. This is equivalent to $z^{\sim -}xy \leqslant 0$, and thus $xy \leqslant z^{\sim}$. 

Assume $\langle P, \leqslant, \cdot, 1\rangle$ is a pomonoid and (i) and (ii) hold. We must show that (\ref{eq:ipo}) holds. We first show that $x \leqslant x^{\sim -}$ and $x \leqslant x^{-\sim}$ for all $x \in P$. Let $x \in P$. Then $1\cdot x^{\sim}\leqslant x^\sim$, and so $x \cdot 1 \leqslant x^{\sim -}$ by (ii), which implies $x \leqslant x^{\sim -}$. Similarly, $x^{-} \cdot 1 \leqslant x^{-}$, so $1\cdot  x \leqslant x^{-\sim}$, and consequently $x \leqslant x^{-\sim}$. It thus follows that $x = x^{-\sim}$ and $x =x^{\sim -}$ for all $x \in P$. 

Next we show that $1^- = 1^\sim$. Since $1 \cdot 1^{-} \leqslant 1^{-}$, we have $1^{-}\cdot 1 \leqslant 1^{\sim}$ by (ii), which means $1^- \leqslant 1^\sim$. Similarly, $1^\sim \cdot 1 \leqslant 1^\sim$ implies $1 \cdot 1^\sim \leqslant 1^{-}$, so $1^\sim \leqslant 1^-$. Hence, $1^{-} = 1^{\sim}$. 

Now assume $x \leqslant y$. Then $1 \cdot x \leqslant y^{\sim -}$, and so, by (ii), $x y^\sim \leqslant 1^\sim = 1^{-}$. Conversely, assume $x y^\sim \leqslant  1^{\sim}$. Then $1\cdot  x \leqslant y^{\sim -}$. Hence, by (i), $1 \cdot x \leqslant y$, i.e., $x \leqslant y$. Similarly, $x \leqslant y$ iff $x \cdot 1 \leqslant y^{-\sim}$ iff $y^{-}x \leqslant 1^-$.
\qed
\end{proof}

\subsection{Pregroups}\label{sub:pregroups}

Pregroups were first described by Lambek~\cite{Lam99}. They are special ipo-monoids. 

\begin{definition}\label{def:pregroup}
A \emph{pregroup} is a structure $\mathbf{P} = \langle P, \leqslant, \cdot, 1, ^\ell, ^r\rangle$ such that $\langle P, \leqslant, \cdot, 1\rangle$ is a partially ordered monoid and the  following conditions hold for all $x \in P$:
\begin{equation}
x^\ell x \leqslant 1 \leqslant x x^\ell \qquad \textnormal{ and } \qquad x x^r \leqslant 1 \leqslant x^r  x.
\end{equation}
\end{definition}
The elements $x^{\ell}$ and $x^r$ are called the \emph{left adjoint} and the \emph{right
adjoint}, respectively, of $x$. If $x^\ell = x^r$, then we say the pregroup is \emph{cyclic}.

\begin{example}\label{ex:group_is_pregroup}
If $\mathbf{G} = \langle G, \cdot, e, ^{-1}\rangle$ is a group, then $\langle G, =, \cdot, e, ^{-1}, ^{-1}\rangle$ is a pregroup. Every finite pregroup is of this form. 
\end{example}

The following is an example of a non-cyclic pregroup and can be found in~\cite{Lam99}.

\begin{example}\label{ex:Lambek pregroup}
Consider the set $\textsf{U}(\mathbb{Z})$ consisting of all unbounded, monotone  functions $f: \mathbb{Z} \to \mathbb{Z}$. For all $f, g \in \textsf{U}\left(\mathbb{Z}\right)$, define $f \leqslant_{\textsf{U}\left(\mathbb{Z}\right)} g$ iff $f(x) \leqslant_{\mathbb{Z}} g(x)$ for all $x \in \mathbb{Z}$. Further, for each unbounded monotone function $f$, define $f^{\ell}(x) = \textnormal{min}\{y \in \mathbb{Z} \mid x \leqslant f(y)\}$ and $f^r(x) = \textnormal{max}\{y \in \mathbb{Z} \mid f(y) \leqslant x\}$ for all $x \in \mathbb{Z}$. 
Then $\langle \textsf{U}(\mathbb{Z}), \leqslant_{\textsf{U}\left(\mathbb{Z}\right)}, \circ, \textnormal{id}_\mathbb{Z}, ^\ell, ^r \rangle$ is a pregroup, where $\textnormal{id}_\mathbb{Z}$ is the identity map. 
\end{example}

The properties in the following proposition are easy to prove. 

\begin{proposition}\label{prop:Properties_of_pregroups}
Let $\mathbf{P} = \langle P, \leqslant, \cdot, 1, ^\ell, ^r\rangle$ be a pregroup. Then: 
\begin{enumerate}[\normalfont (i)]
\item $x^{\ell r} = x = x^{r\ell}$,
\item $1^\ell = 1 = 1^r$,
\item $\left(x y\right)^\ell = y^\ell x^\ell$,
\item $\left(x y\right)^r = y^r x^r$, and 
\item $x \leqslant y$ iff $y^{\ell} \leqslant x^{\ell}$ iff $y^r \leqslant x^r$. 
\end{enumerate}
\end{proposition}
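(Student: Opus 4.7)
The plan is to establish the items in the order stated, exploiting only the pomonoid axioms together with the two defining adjoint inequalities $x^{\ell}x\leqslant 1\leqslant xx^{\ell}$ and $xx^{r}\leqslant 1\leqslant x^{r}x$. I will also need the auxiliary observation that left and right adjoints are unique whenever they exist: if $a$ and $b$ both satisfy $ax\leqslant 1\leqslant xa$ and $bx\leqslant 1\leqslant xb$, then $a=a\cdot 1\leqslant a(xb)=(ax)b\leqslant 1\cdot b=b$, and symmetrically $b\leqslant a$. This is the one genuinely useful lemma underlying items (iii) and (iv).

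For (i), I would sandwich $x$ and $x^{\ell r}$ with the adjoint inequalities. From $1\leqslant x^{\ell r}x^{\ell}$ and $x^{\ell}x\leqslant 1$ we obtain
\[
x=1\cdot x\leqslant x^{\ell r}x^{\ell}x\leqslant x^{\ell r}\cdot 1=x^{\ell r},
\]
and from $1\leqslant xx^{\ell}$ and $x^{\ell}x^{\ell r}\leqslant 1$ we obtain $x^{\ell r}\leqslant xx^{\ell}x^{\ell r}\leqslant x$. The proof that $x^{r\ell}=x$ is entirely dual. Item (ii) then falls out immediately: the defining inequalities for $1^{\ell}$ collapse to $1^{\ell}\leqslant 1\leqslant 1^{\ell}$, and similarly for $1^{r}$.

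For (iii), by the uniqueness observation it suffices to check that $y^{\ell}x^{\ell}$ satisfies the defining inequalities for $(xy)^{\ell}$. One direction is
\[
(y^{\ell}x^{\ell})(xy)=y^{\ell}(x^{\ell}x)y\leqslant y^{\ell}\cdot 1\cdot y=y^{\ell}y\leqslant 1,
\]
and for the other, using $1\leqslant yy^{\ell}$ followed by $1\leqslant xx^{\ell}$,
\[
1\leqslant xx^{\ell}\leqslant x(yy^{\ell})x^{\ell}=(xy)(y^{\ell}x^{\ell}).
\]
Item (iv) is symmetric, swapping $\ell$ and $r$ throughout.

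Finally, for (v), I would prove the forward implications directly and then obtain the converses by applying the \emph{opposite} adjoint's monotonicity statement together with (i). Assuming $x\leqslant y$, we have $y^{\ell}x\leqslant y^{\ell}y\leqslant 1$, so
\[
y^{\ell}=y^{\ell}\cdot 1\leqslant y^{\ell}xx^{\ell}\leqslant 1\cdot x^{\ell}=x^{\ell};
\]
the same computation with $r$ in place of $\ell$ yields $y^{r}\leqslant x^{r}$. Conversely, if $y^{\ell}\leqslant x^{\ell}$, the $r$-version just proved gives $(x^{\ell})^{r}\leqslant (y^{\ell})^{r}$, which is $x\leqslant y$ by (i); analogously, $y^{r}\leqslant x^{r}$ implies $x\leqslant y$ via the $\ell$-version. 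The only point requiring a little care is not to try to derive each converse from its own forward direction (which would require $x^{\ell\ell}=x$ rather than $x^{\ell r}=x$); pairing $\ell$ with $r$ via (i) is what makes the converses work cleanly.
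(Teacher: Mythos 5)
Your proof is correct, and since the paper explicitly omits the argument ("easy to prove"), your write-up is exactly the standard verification the authors had in mind: sandwiching via the adjoint inequalities for (i)--(ii), uniqueness of adjoints for (iii)--(iv), and pairing $\ell$ with $r$ through (i) for the converses in (v). The only cosmetic imprecision is in (v), where the $r$-case is the mirror-image computation (multiplying $y^r$ on the left by $x^r x$ after deriving $xy^r\leqslant 1$) rather than a literal substitution of $r$ for $\ell$, but the intended dual argument clearly goes through.
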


In the next proposition we will show that any pregroup is an ipo-monoid. 

\begin{proposition}\label{prop:pregroups_are_ipo-monoids}
Every pregroup is an ipo-monoid. 
\end{proposition}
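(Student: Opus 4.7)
The plan is to identify the involutive negations of the ipo-monoid with the pregroup adjoints. Specifically, given a pregroup $\mathbf{P} = \langle P, \leqslant, \cdot, 1, {}^\ell, {}^r\rangle$, I would set $x^\sim := x^r$ and $x^- := x^\ell$ for all $x \in P$, and then verify that $\langle P, \leqslant, \cdot, 1, {}^-, {}^\sim\rangle$ satisfies Definition~\ref{def:Ipo-monoid}. The partial order, monoid structure, and order-preservation of $\cdot$ are already part of the pregroup data, so the only thing left is equation~(\ref{eq:ipo}).

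First I would record the relevant sanity check: with this assignment, $x^{-\sim} = x^{\ell r} = x$ and $x^{\sim -} = x^{r\ell} = x$ by Proposition~\ref{prop:Properties_of_pregroups}(i), and $1^- = 1^\ell = 1 = 1^r = 1^\sim$ by Proposition~\ref{prop:Properties_of_pregroups}(ii). So the target element $0 = 1^-$ coincides with the monoid identity $1$, which makes the two halves of equation~(\ref{eq:ipo}) read, after substitution,
\[
x \leqslant y \quad\iff\quad x y^r \leqslant 1 \quad\iff\quad y^\ell x \leqslant 1.
\]

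Now I would prove these biconditionals using only the defining pregroup inequalities $x^\ell x \leqslant 1 \leqslant x x^\ell$ and $x x^r \leqslant 1 \leqslant x^r x$ together with order-preservation of $\cdot$. For the first equivalence: if $x \leqslant y$ then $x y^r \leqslant y y^r \leqslant 1$; conversely, if $x y^r \leqslant 1$ then multiplying on the right by $y$ gives $x y^r y \leqslant y$, and since $1 \leqslant y^r y$ we get $x = x \cdot 1 \leqslant x y^r y \leqslant y$. The second equivalence is symmetric: if $x \leqslant y$ then $y^\ell x \leqslant y^\ell y \leqslant 1$, and if $y^\ell x \leqslant 1$ then $y y^\ell x \leqslant y$, and since $1 \leqslant y y^\ell$ we conclude $x \leqslant y y^\ell x \leqslant y$.

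I do not foresee any real obstacle here — the argument is a direct calculation once the correct matching of adjoints to negations is chosen. The only small point worth being careful about is the pairing: interchanging $^\ell$ and $^r$ in the assignment would still recover an ipo-monoid structure, but only the choice $x^- := x^\ell$, $x^\sim := x^r$ gives the natural reading in which $x^{-\sim} = x^{\ell r}$, matching the convention of Lemma~\ref{lem:properties_ipo-mon}(i). After these four short implications, the proof is complete with a \qed.
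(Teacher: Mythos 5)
Your proposal is correct and follows essentially the same route as the paper: both identify $x^{-}$ with $x^{\ell}$ and $x^{\sim}$ with $x^{r}$ and then verify equation~(\ref{eq:ipo}) directly from the adjunction inequalities, using exactly the same multiplication-and-cancellation steps (the paper proves the first equivalence and declares the second ``similar,'' which you write out explicitly). No gaps.
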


\begin{proof}
Let $\mathbf{P} = \langle P, \leqslant, \cdot, 1, ^\ell, ^r\rangle$ be a pregroup. 
We have to show that $x \leqslant y$ iff $x y^r \leqslant 1^\ell = 1$ iff $y^\ell  x \leqslant 1$. Assume $x\leqslant y$. Then $x y^r \leqslant y y^r$, and so, since $y y^r \leqslant 1$, we get $x y^r \leqslant 1$. For the converse, assume $x y^r \leqslant 1$. Then $x  y^r y \leqslant 1\cdot y = y$. Now $1 \leqslant y^r y$, so $x = x \cdot 1 \leqslant x  y^r  y$. Therefore, $x \leqslant y$. The other equivalence can be proved in a similar way. 
\qed
\end{proof}

\section{Distributive residuated lattices of binary relations}\label{sec:embedDRL}

The following example of a concrete distributive residuated lattice comes from the PhD thesis of Galatos~\cite{G03}. For a pomonoid  $\mathbf{P}=\langle P,\leqslant,\cdot,1\rangle$, we write $\mathsf{Up}(P,\leqslant )$ for the set of up-sets of $\langle P,\leqslant\rangle $. For $U,V \subseteq P$, let 
$U \bullet V = {\uparrow}\left\{\,x y \mid x \in U , y \in V\,\right\}$ and 
$U/V=\{\, z \mid \{z\}\bullet V \subseteq U \,\}$ and $V\backslash U = \{\, z \mid V\bullet \{z\} \subseteq U \,\}$.

\begin{proposition}{\normalfont\cite[Example 3.17]{G03}}\label{prop:RLUPP}
Let $\mathbf{P}=\langle P, \leqslant, \cdot, 1\rangle$ be a pomonoid. Then $\langle \mathsf{Up}(\mathbf{P}), \cap, \cup, \bullet, {\uparrow}1,/,\backslash\rangle$ is a distributive residuated lattice. 
\end{proposition}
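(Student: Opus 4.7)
The plan is to verify, in order, the four ingredients of a distributive residuated lattice, taking full advantage of the fact that everything is formulated in terms of set-theoretic operations on up-sets.

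First, I would observe that $\langle \mathsf{Up}(P,\leqslant),\cap,\cup\rangle$ is a distributive lattice. Up-sets are closed under arbitrary unions and intersections, and these are just the set-theoretic operations; distributivity is inherited from the powerset. This step is immediate.

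Second, I would show that $\bullet$ is a well-defined, associative operation on $\mathsf{Up}(\mathbf{P})$ with identity ${\uparrow}1$. Well-definedness is automatic since $U \bullet V$ is defined as an upward closure. For the identity law $U \bullet {\uparrow}1 = U$, taking $y=1$ gives $U \subseteq U\bullet {\uparrow}1$; conversely, for $x \in U$ and $y \geqslant 1$, order-preservation of $\cdot$ yields $xy \geqslant x$, so $xy \in U$, hence $U \bullet {\uparrow}1 \subseteq U$. The opposite identity law is symmetric. For associativity, I would use that $(U\bullet V)\bullet W$ equals ${\uparrow}\{zw \mid z \geqslant xy,\ x\in U, y\in V, w\in W\}$, and by order-preservation $zw \geqslant (xy)w$, so this upward closure coincides with ${\uparrow}\{(xy)w\mid x\in U, y\in V, w\in W\}$. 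Monoid associativity in $\mathbf{P}$ then gives equality with $U\bullet(V\bullet W)$.

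Third, I would verify the residuation law $U\bullet V \subseteq W \iff U\subseteq W/V \iff V \subseteq U\backslash W$. Since $W$ is an up-set, $U\bullet V \subseteq W$ is equivalent to $\{xy \mid x\in U, y\in V\}\subseteq W$, which unwinds directly into membership of each $x$ in $W/V$ (and dually for $U\backslash W$). I also need to check that $W/V$ and $U\backslash W$ really are up-sets: if $x\in W/V$ and $x'\geqslant x$, then $x'y \geqslant xy \in W$ for every $y\in V$, so $x'y\in W$ by $W$ being an up-set; hence $\{x'\}\bullet V\subseteq W$, i.e., $x'\in W/V$.

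The only genuinely substantive step is associativity of $\bullet$, because one must keep track of the interplay between the upward closure and order-preservation of the monoid operation; the rest is bookkeeping. I expect no real obstacle, since distributivity is free from the set-theoretic nature of $\cap,\cup$, and residuation follows from the very definitions of $/$ and $\backslash$ once one notices that membership of an element in an up-set $W$ is equivalent to membership of the whole principal up-set it generates.
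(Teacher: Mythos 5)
Your proposal is correct: the paper itself gives no proof of this proposition, citing it directly from Galatos's thesis, and your direct verification (distributive lattice of up-sets, monoid laws for $\bullet$ via order-preservation and upward closure, residuation from the fact that an up-set contains a set iff it contains its upward closure) is exactly the standard argument one would supply. All the steps check out, including the only delicate point you flag, namely that $(U\bullet V)\bullet W$ and $U\bullet(V\bullet W)$ both reduce to ${\uparrow}\{(xy)w \mid x\in U,\ y\in V,\ w\in W\}$.
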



Now we show that under certain conditions, when $\mathbf{P}=\langle P, \leqslant, \cdot, 1\rangle$ is a partially ordered monoid, the residuated lattice $\langle \mathsf{Up}(P,\leqslant) ,\cap, \cup, \bullet, {\uparrow}1, \backslash,/\rangle$ 
can be embedded into a distributive residuated lattice of binary relations.

First, we set up notation and recall some properties of binary relations. Let $X$ be a set and $R\subseteq X^2$. The converse  of $R$ is 
$R^\smile = \left\{\left(x, y\right) \mid \left(y, x\right) \in R\right\}$, and its complement is
$R^c=\{\, (x,y) \in X^2 \mid (x,y) \notin R\,\}$.
The identity relation is denoted by $\mathrm{id}_X=\{\,(x,x)\mid x \in X\,\}$.
For $R\subseteq X^2$ we have $(R^{\smile})^{\smile}=R$, $(R^{\smile})^c=(R^c)^{\smile}$, and $\mathrm{id}_X\mathbin{;} R = R\mathbin{;} \mathrm{id}_X = R$.
The composition of two binary relations $S$ and $R$
is given by 
$R \mathbin{;} S = \left\{\left(x, y\right) \mid \left(\exists z \in X\right)\left(\left(x, z\right) \in R \textnormal{ and } \left(z, y\right) \in S\right)\right\}$.
For $R,S, T\subseteq X^2$ we have $\left(R\, ; S\right)\mathbin{;} T = R\mathbin{;} \left(S\mathbin{;} T\right)$ and 
$\left(R\mathbin{;} S\right)^\smile = S^\smile\mathbin{;} R^\smile$.

Now, for a partially ordered monoid $\mathbf{P}=\langle P, \leqslant, \cdot, 1\rangle$, consider 
the twisted product order $\preccurlyeq$ on $P\times P$: $(x,y) \preccurlyeq (u,v)$ iff $u \leqslant x$ and $y \leqslant v$. 
The poset $(P^2,\preccurlyeq)$ gives rise to a distributive residuated lattice $ \langle \mathsf{Up}(
P^2,\preccurlyeq), \cap,\cup, ;, \leqslant, \backslash,/
\rangle$ where $R\backslash S=(R^\smallsmile ; S^c)^c$ and $R/S =(R^c;S^\smallsmile)^c$ (cf.~\cite[Section 3]{RDqRA25}). 

Now define a map $\sigma : \mathsf{Up}(P,\leqslant) \to \mathsf{Up}(P^2,\preccurlyeq)$ as follows:
$$\sigma(U)= {\uparrow}\{\, (g, g u) \mid g \in P, u \in U\, \}.$$

The map $\sigma$ is inspired by the Cayley map used for representing group relation algebras. 
The lemma below is important and will be used frequently in proofs later on. It is easy to prove using the order preservation of the monoid operation. 
\begin{lemma}\label{lem:alt_sigma}
Let $\mathbf{P}=\langle P,\leqslant,\cdot, 1\rangle$ be a pomonoid with $u \in P$ and $U \in \mathsf{Up}(P,\leqslant)$.   
\begin{enumerate}[\normalfont (i)]
\item $(x,y) \in \sigma(U)$ iff there exists $u \in U$ such that  $x u \leqslant y$.
\item $(x,y) \in \sigma({\uparrow}u)$ iff $x u \leqslant y$.
\end{enumerate}
\end{lemma}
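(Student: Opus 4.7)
The plan is to unravel the membership $(x,y)\in\sigma(U)$ by using the definition of the twisted order $\preccurlyeq$, and then leverage the fact that the monoid operation is order-preserving in both coordinates. Recall that $(a,b)\preccurlyeq(c,d)$ means $c\leqslant a$ and $b\leqslant d$. So from $\sigma(U)={\uparrow}\{(g,gu)\mid g\in P,\,u\in U\}$ we get the unpacked criterion
\[
(x,y)\in\sigma(U)\ \Longleftrightarrow\ \exists g\in P,\ \exists u\in U:\ x\leqslant g\ \text{ and }\ gu\leqslant y.
\]
This rewriting is the one useful step of the proof; everything else is a short chase.

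For (i), the forward direction is immediate: given $g$ and $u$ as above, order preservation of $\cdot$ on the left yields $xu\leqslant gu\leqslant y$, so $u\in U$ itself witnesses the right-hand side. For the backward direction, suppose $xu\leqslant y$ for some $u\in U$. Taking $g:=x$, we trivially have $x\leqslant g$ and $gu=xu\leqslant y$, so $(x,y)\in\sigma(U)$.

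For (ii), I would apply (i) with $U={\uparrow}u$. The backward direction is trivial: if $xu\leqslant y$, then $u\in{\uparrow}u$ witnesses (i). For the forward direction, (i) gives some $u'\in{\uparrow}u$ with $xu'\leqslant y$; since $u\leqslant u'$, order preservation of $\cdot$ on the right yields $xu\leqslant xu'\leqslant y$.

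There is no real obstacle here: the only subtle point is noticing that the contravariance in the first coordinate of $\preccurlyeq$ lets us choose $g=x$ as a canonical witness, which avoids any appeal to $U$ being an up-set in the argument for (i); the up-set hypothesis is genuinely used only to derive (ii) from (i) via the monotonicity of $\cdot$.
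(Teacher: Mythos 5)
Your proof is correct and matches the paper's intended argument: the paper omits the proof, noting only that the lemma ``is easy to prove using the order preservation of the monoid operation,'' which is precisely the tool you use after correctly unpacking the twisted order $\preccurlyeq$ and choosing $g=x$ as the canonical witness.
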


In general, $\sigma$ will not be a residuated lattice embedding, but it will satisfy many of the required conditions. 
\begin{lemma}
If $\mathbf{P}$
is a pomonoid, then $\sigma : \mathsf{Up}(P,\leqslant) \to \mathsf{Up}(P^2,\preccurlyeq)$ is one-to-one, and it preserves arbitrary joins, the fusion operation $\bullet$, and the monoid identity.   
\end{lemma}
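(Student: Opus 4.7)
The plan is to verify the four claims separately, each of them following easily from Lemma~\ref{lem:alt_sigma}, the order preservation of $\cdot$, and the fact that joins in both up-set lattices are unions.

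For injectivity, the key observation is that $U$ can be recovered from $\sigma(U)$ by looking at pairs whose first coordinate is $1$. Indeed, for any $u \in U$, part (ii) of Lemma~\ref{lem:alt_sigma} gives $(1,u) \in \sigma({\uparrow}u) \subseteq \sigma(U)$. Conversely, if $(1,u) \in \sigma(U)$, then by part (i) there exists $v \in U$ with $v = 1\cdot v \leqslant u$, and since $U$ is an up-set, $u \in U$. Thus $U = \{u \in P \mid (1,u) \in \sigma(U)\}$, so $\sigma(U) = \sigma(V)$ forces $U=V$.

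Preservation of arbitrary joins is almost formal: joins in $\mathsf{Up}(P,\leqslant)$ and in $\mathsf{Up}(P^2,\preccurlyeq)$ are unions, and up-closure commutes with unions. Concretely, $\sigma\bigl(\bigcup_i U_i\bigr)={\uparrow}\{(g, gu) \mid g\in P,\ u\in \bigcup_i U_i\} = {\uparrow}\bigcup_i\{(g, gu)\mid g\in P,\ u\in U_i\} = \bigcup_i\sigma(U_i)$. For the monoid identity, I will use part (ii) of Lemma~\ref{lem:alt_sigma}: $(x,y)\in \sigma({\uparrow}1)$ iff $x\cdot 1 \leqslant y$ iff $x \leqslant y$. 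Hence $\sigma({\uparrow}1)$ is the relation ${\leqslant}$ on $P$, which is exactly the monoid identity in $\langle\mathsf{Up}(P^2,\preccurlyeq),{;}\rangle$ (see~\cite{RDqRA25}).

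The fusion preservation is the step that requires the most care, though it is still routine. For the inclusion $\sigma(U)\mathbin{;}\sigma(V) \subseteq \sigma(U\bullet V)$, suppose $(x,y)\in \sigma(U)\mathbin{;}\sigma(V)$, so there exists $z$ with $(x,z)\in \sigma(U)$ and $(z,y)\in \sigma(V)$. By Lemma~\ref{lem:alt_sigma}(i) there are $u \in U$ and $v \in V$ with $xu \leqslant z$ and $zv\leqslant y$. Then $x(uv) = (xu)v \leqslant zv \leqslant y$, and $uv \in U\bullet V$, so $(x,y)\in \sigma(U\bullet V)$. For the reverse inclusion, if $(x,y)\in \sigma(U\bullet V)$, then there is $w \in U\bullet V$ with $xw \leqslant y$; by definition of $\bullet$ we may take $u\in U, v\in V$ with $uv \leqslant w$, whence $x(uv) \leqslant y$. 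Setting $z := xu$ yields $(x,z)\in \sigma(U)$ (since $xu \leqslant z$) and $(z,y) \in \sigma(V)$ (since $zv = xuv \leqslant y$), giving $(x,y) \in \sigma(U)\mathbin{;}\sigma(V)$. The only subtle point is handling the up-closures in the definitions of $\sigma$ and $\bullet$; this is taken care of by Lemma~\ref{lem:alt_sigma}(i), which absorbs them into the inequalities $xu \leqslant \cdot$ and $uv \leqslant w$.
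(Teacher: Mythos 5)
Your proof is correct and follows essentially the same route as the paper's: injectivity via pairs with first coordinate $1$, joins via commuting up-closure with unions, the identity via Lemma~\ref{lem:alt_sigma}(ii), and fusion by the same choice of intermediate point $z = xu$. The only cosmetic difference is that the paper separately notes the trivial case $U=\varnothing$ or $V=\varnothing$ for fusion, which in your phrasing is handled vacuously anyway.
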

\begin{proof}
Suppose $U \neq V$ and without loss of generality, consider  $x \in U$ such that $x \notin V$. Clearly  $(1,x) \in \sigma(U)$. Suppose that $(1,x) \in \sigma(V)$. 
By Lemma~\ref{lem:alt_sigma}(i) there exists $v \in V$ with $1\cdot v \leqslant x$. Since $V$ an upset, we get $x \in V$, a contradiction. 



It is easy to see that $\sigma$ is order-preserving, hence $\bigcup_{i\in I} \left( \sigma(U_i)\right) \subseteq \sigma \left( \bigcup_{i\in I} U_i\right) $. If $(x,y) \in \sigma \left( \bigcup_{i\in I} U_i \right)$ then there exists $u \in \bigcup_{i \in I} U_i$ and $g \in P$ such that $(g,gu)\preccurlyeq (x,y)$. Then $u \in U_j$ for some $j \in I$ and so  $(x,y) \in \sigma(U_j) \subseteq \bigcup_{i\in I} \left( \sigma(U_i)\right)$.

Now we show that $\sigma$ preserves the monoid operation. First, observe that if $U=\varnothing$ or $V = \varnothing$, then we trivially get $\sigma(U \bullet V) =\sigma(U) \mathbin{;}  \sigma (V)$. If $(x,y) \in \sigma (U \bullet V)$ then by Lemma~\ref{lem:alt_sigma}(i) there exists $t \in U\bullet V$ such that $xt \leqslant y$. Now $uv \leqslant t$ for some $u \in U$ and $v \in V$ which implies $xuv \leqslant xt \leqslant y$.  
From $u \in U$ we get $(x,xu) \in \sigma(U)$. Since $v \in V$ and $xuv \leqslant y$ we get $(xu,y) \in \sigma(V)$. Hence $(x,y) \in \sigma(U)\mathbin{;}\sigma(V)$.

Now let $(x,y) \in \sigma(U) ; \sigma(V)$. There exists $z$ with $(x,z)\in \sigma(U)$ and $(z,y) \in \sigma(V)$. By Lemma~\ref{lem:alt_sigma}(i) there exist $u \in U$ and $v \in V$ such that $xu \leqslant z$ and $zv\leqslant y$. Hence $x(uv) \leqslant zv \leqslant y$ and since $uv \in U \bullet V$ we have $(x,y) \in \sigma(U \bullet V)$.

By Lemma~\ref{lem:alt_sigma}(ii) we get $\sigma({\uparrow}1)=\{\,(x,y) \mid x\cdot 1 \leqslant y \,\}= {\leqslant}$, which shows that $\sigma$ preserves the monoid identity. 
\qed

\end{proof}

The map $\sigma$ does \emph{not} always preserve meets if $P$ is a pomonoid. Consider the discretely ordered pomonoid with set $\{0,1\}$ and $0^2=0$. We get $\sigma(\{0\})=\{(0,0),(1,0)\}$ and $\sigma(\{1\})=\{(0,0),(1,1)\}$. Now $\sigma(\{0\} \cap \{1\})=\sigma(\varnothing)=\varnothing$. However $\sigma(\{0\}) \cap \sigma(\{1\})=\{(0,0)\}$.  

Below we characterise those pomonoids for which $\sigma$ will preserve meets and hence will be a (distributive) residuated lattice embedding. 


\begin{proposition}\label{prop:CondW}
Let $\langle P,\leqslant,\cdot,1 \rangle$ be a pomonoid. The map $\sigma$ preserves meets if and only if  whenever $x  u \leqslant y $ and $x v \leqslant y$, there exists $w$ such that $u\leqslant w$, $v \leqslant w $ and $x  w \leqslant y$.     \end{proposition}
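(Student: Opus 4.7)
The plan is to verify both implications by carefully using the characterisation of $\sigma$ given in Lemma~\ref{lem:alt_sigma}. The forward direction should come out by specialising to principal up-sets, while the backward direction follows by unfolding the definitions and using that $U,V$ are up-sets.

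For the forward direction, I would start by assuming $\sigma$ preserves (binary) meets and fix $x,u,v,y \in P$ with $x u \leqslant y$ and $x v \leqslant y$. Applying Lemma~\ref{lem:alt_sigma}(ii), these inequalities mean $(x,y) \in \sigma({\uparrow}u)$ and $(x,y) \in \sigma({\uparrow}v)$, hence $(x,y) \in \sigma({\uparrow}u) \cap \sigma({\uparrow}v) = \sigma({\uparrow}u \cap {\uparrow}v)$ by hypothesis. Lemma~\ref{lem:alt_sigma}(i) then furnishes some $w \in {\uparrow}u \cap {\uparrow}v$ with $xw \leqslant y$, which is exactly the required $w$.

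For the converse, I would note that $\sigma(U \cap V) \subseteq \sigma(U) \cap \sigma(V)$ is automatic from order preservation, so only the reverse inclusion needs work. Take $(x,y) \in \sigma(U) \cap \sigma(V)$; by Lemma~\ref{lem:alt_sigma}(i) there exist $u \in U$ and $v \in V$ with $xu \leqslant y$ and $xv \leqslant y$. The hypothesis yields $w$ with $u \leqslant w$, $v \leqslant w$, and $xw \leqslant y$. Since $U$ and $V$ are up-sets, $w \in U \cap V$, and Lemma~\ref{lem:alt_sigma}(i) again gives $(x,y) \in \sigma(U \cap V)$.

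There is no real obstacle here: the proof is essentially a translation exercise via Lemma~\ref{lem:alt_sigma}. The only minor subtlety is remembering to use principal up-sets in the forward direction to recover a pointwise condition from the set-theoretic hypothesis, and to use the up-set property in the backward direction to promote a common upper bound $w$ of $u,v$ into a common element of $U \cap V$.
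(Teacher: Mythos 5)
Your proof is correct and follows essentially the same route as the paper: principal up-sets plus Lemma~\ref{lem:alt_sigma} for the forward direction, and Lemma~\ref{lem:alt_sigma}(i) together with the up-set property of $U$ and $V$ for the converse (the paper unfolds the definition of $\sigma$ directly where you cite the lemma, but the content is identical). No gaps.
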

\begin{proof}
Assume $\sigma$ preserves meets and that $xu \leqslant y$ and $xv \leqslant y$. By Lemma~\ref{lem:alt_sigma}(ii) we have 
$(x,y) \in \sigma({\uparrow}u)\cap \sigma({\uparrow}v)$. Then $(x,y) \in  \sigma({\uparrow}u \cap {\uparrow}v)$ and so by Lemma~\ref{lem:alt_sigma}(i) there exists $w \in {\uparrow}u \cap {\uparrow}v$ such that $xw \leqslant y$. 

Now assume that $\mathbf{P}$ satisfies Condition W. 
Since $\sigma$ preserves joins, it is order preserving, so $\sigma(U \cap V) \subseteq \sigma(U) \cap \sigma(V)$. Let $(x,y) \in \sigma(U) \cap \sigma (V)$. Then there exist $g, h \in P$, $u \in U$, $v \in V$ such that $(g,gu) \preccurlyeq (x,y)$ and $(h,hv) \preccurlyeq (x,y)$. Since $x \leqslant g$ and $x \leqslant h$ we have $xu\leqslant gu \leqslant y$ and $xv \leqslant hu \leqslant y$. Our assumption gives us  $w \in U \cap V$ with $xw \leqslant y$. Hence $(x,xw) \in \sigma (U \cap V)$ and $(x,xw) \preccurlyeq (x,y)$.
\qed
\end{proof}
We will refer to the condition in Proposition~\ref{prop:CondW} as \emph{Condition W}. 
There are four important examples of pomonoids for which Condition W holds. 

\begin{theorem}\label{thm:sigma-pres-meets}
The map $\sigma$ is a (distributive) residuated lattice embedding if $\langle P, \leqslant, \cdot,1 \rangle$ is the pomonoid reduct of 
{\upshape(i)} a residuated pomonoid, {\upshape(ii)} an ipo-monoid, {\upshape(iii)} a lattice-ordered monoid, or {\upshape(iv)} a pregroup. 
\end{theorem}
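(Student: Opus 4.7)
The plan is to invoke Proposition~\ref{prop:CondW} and verify Condition W in each of the four listed cases. The preceding lemma already shows that $\sigma$ is an injection preserving arbitrary joins, the fusion $\bullet$, and the identity ${\uparrow}1$; by Proposition~\ref{prop:CondW}, Condition W promotes this to preservation of meets, and hence to a lattice embedding. Both up-set algebras are distributive (by Proposition~\ref{prop:RLUPP} and its analogue for $(P^2,\preccurlyeq)$), so the embedding is automatically distributive. Preservation of the residuals $\backslash,/$ then follows from meet-preservation together with a direct calculation $\sigma({\uparrow}u\backslash V)=\sigma({\uparrow}u)\backslash\sigma(V)$ for principal up-sets, since every residual can be recovered as $U\backslash V=\bigcap_{u\in U}({\uparrow}u\backslash V)$, a meet preserved by $\sigma$ once Condition W is in force.

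In each of the four cases the witness $w$ demanded by Condition W is, in essence, a right-residual of $y$ by $x$. In case (i), a residuated pomonoid, the literal residual $w=x\backslash y$ works: $u,v\leqslant x\backslash y$ by residuation and $x(x\backslash y)\leqslant y$ by the counit. In case (ii), an ipo-monoid, Lemma~\ref{lem:properties_ipo-mon}(iv) yields the equivalence $xa\leqslant y\iff a\leqslant(y^-x)^\sim$, so choosing $w=(y^-x)^\sim$ gives $u,v\leqslant w$ and, reading the equivalence backwards from $w\leqslant w$, $xw\leqslant y$. In case (iii), a lattice-ordered monoid (with the standard convention that multiplication distributes over binary joins), take $w=u\vee v$: then $u,v\leqslant w$ and $xw=xu\vee xv\leqslant y$. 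In case (iv), a pregroup, set $w=x^ry$; left-multiplying $xu\leqslant y$ by $x^r$ and using $1\leqslant x^rx$ gives $u\leqslant x^rxu\leqslant x^ry=w$, and $xw=xx^ry\leqslant 1\cdot y=y$ follows from $xx^r\leqslant 1$. Case (iv) is moreover subsumed by case (ii) via Proposition~\ref{prop:pregroups_are_ipo-monoids}.

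The principal obstacle is conceptual rather than computational, and lies in case (ii): one must recognise that the single defining ipo-monoid equivalence of Lemma~\ref{lem:properties_ipo-mon}(iv) already supplies exactly the residual-like term $(y^-x)^\sim$ demanded by Condition W. Once this is seen, cases (i), (ii), and (iv) follow a uniform \emph{take the residual} template, while case (iii) is the mild outlier in which a genuine join $u\vee v$, rather than a residual, plays that role and distributivity of $\cdot$ over $\vee$ is used in place of adjunction.
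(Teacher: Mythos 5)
Your proposal is correct and follows essentially the same route as the paper: reduce everything to Condition~W via Proposition~\ref{prop:CondW}, and witness it by $w=x\backslash y$ in case (i), by $w=(y^-x)^{\sim}$ from Lemma~\ref{lem:properties_ipo-mon}(iv) in case (ii), by $w=u\vee v$ in case (iii), and by reducing (iv) to (ii) through Proposition~\ref{prop:pregroups_are_ipo-monoids}. Your extra sketch of why the residuals $\backslash,/$ are preserved goes beyond what the paper records (the paper simply asserts that meet-preservation upgrades $\sigma$ to a residuated lattice embedding), but the core argument coincides with the paper's proof.
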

\begin{proof} 
The fact that $\sigma$ is one-to-one and preserves joins, fusion, and the monoid identity follows from the results above. We show that Condition W is satisfied for  (i)--(iii). Assume that $xu \leqslant y$ and $xv \leqslant y$.

(i) If $\langle P, \leqslant,\cdot, 1\rangle$ is a residuated pomonoid,  let $w = x\backslash y$.  (ii) From Lemma~\ref{lem:properties_ipo-mon}(iv), an ipo-monoid is residuated. 
(iii) If $\langle P, \leqslant,\cdot,1\rangle$ is a lattice-ordered monoid then $w=u \vee v$ meets the requirements since fusion distributes over joins. 

(iv) We use Proposition~\ref{prop:pregroups_are_ipo-monoids} and item (ii).  
\qed

\end{proof}

The main consequence of the theorem above is that if a distributive residuated lattice is isomorphic to a subalgebra of the algebra of up-sets of a residuated pomonoid, a lattice-ordered monoid, or a pregroup, then it is representable as an algebra of binary relations with fusion given by relational composition. 

\section{Representable DInFL-algebras and DqRAs}\label{sec:representable}

In recent papers~\cite{RDqRA25,CR24}, we gave a definition of what is means for a DInFL-algebra or DqRA to be representable. The definition relies on building algebras from posets, similar to the way relation algebras are built from sets. 

The operations defined on binary relations defined at the start of Section~\ref{sec:embedDRL} will be used throughout this section. 

The following lemma (a simplified version of ~\cite[Lemma 3.4]{RDqRA25}) will be used frequently.
Importantly, it can be applied if  $\gamma$ is a bijective function $\gamma : X \to X$.
\begin{lemma}{\normalfont \cite[Lemma 3.4]{RDqRA25}}
\label{Lemma:ComplementComposition}
Let $X$ be a set and $R, S, \gamma \subseteq X^2$. If 
$\gamma$
satisfies 
$\gamma^{\smile}\mathbin{;} \gamma = \mathrm{id}_X$ and $\gamma \mathbin{;} \gamma^{\smile}=\mathrm{id}_X$
then $\left(\gamma\mathbin{;} R\right)^c = \gamma \mathbin{;} R^c$ and $\left(R\mathbin{;} \gamma\right)^c = R^c \mathbin{;} \gamma$. 
\end{lemma}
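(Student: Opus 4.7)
The plan is to first unpack what the hypothesis on $\gamma$ actually says: the inclusion $\gamma^{\smile}\mathbin{;}\gamma \subseteq \mathrm{id}_X$ forces $\gamma$ to be right-unique (a partial function when read from the first coordinate), while $\gamma^{\smile}\mathbin{;}\gamma \supseteq \mathrm{id}_X$ gives surjectivity; symmetrically, $\gamma\mathbin{;}\gamma^{\smile} = \mathrm{id}_X$ delivers injectivity of $\gamma$ and totality. So $\gamma$ is the graph of a bijection $X \to X$. I expect to invoke only three consequences of this in the proof: totality, functionality, and (for the second equation) injectivity. The dummy variable $S$ in the hypothesis plays no role and I would not mention it.

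For the first equation $(\gamma\mathbin{;} R)^c = \gamma\mathbin{;} R^c$, I would prove the two inclusions in sequence. For $\subseteq$: given $(x,y)\in(\gamma\mathbin{;} R)^c$, totality supplies some $z$ with $(x,z)\in\gamma$; if $(z,y)\in R$ then $(x,y)\in\gamma\mathbin{;} R$, contradiction, so $(z,y)\in R^c$ and hence $(x,y)\in\gamma\mathbin{;} R^c$. For $\supseteq$: given $(x,y)\in\gamma\mathbin{;} R^c$ witnessed by $z$, suppose toward contradiction that $(x,y)\in\gamma\mathbin{;} R$ is witnessed by $z'$; functionality of $\gamma$ forces $z=z'$, so $(z,y)$ lies in both $R$ and $R^c$, a contradiction.

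The second equation $(R\mathbin{;}\gamma)^c = R^c\mathbin{;}\gamma$ is handled by the mirror argument, now invoking totality of $\gamma$ from the right (via $\gamma\mathbin{;}\gamma^{\smile}=\mathrm{id}_X$, every $y$ is $\gamma$-related from some unique $z$) and injectivity to rule out two distinct witnesses. Alternatively, and more compactly, it follows from the first equation by taking converses: $\gamma^{\smile}$ satisfies the same two hypotheses as $\gamma$, and the identities $(R\mathbin{;} S)^{\smile} = S^{\smile}\mathbin{;} R^{\smile}$ together with $(R^{\smile})^c = (R^c)^{\smile}$ transport $(\gamma^{\smile}\mathbin{;} R^{\smile})^c = \gamma^{\smile}\mathbin{;}(R^{\smile})^c$ into the desired equation.

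There is no real obstacle here; the only thing to be careful about is keeping straight which half of each biconditional hypothesis (totality versus unique-ness) is being used at each step, so the two inclusions are not conflated. I would likely opt for the direct symmetric treatment of the second equation rather than the converse-based reduction, since it keeps the exposition uniform and makes the role of each hypothesis visible.
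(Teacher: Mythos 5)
Your proof is correct; note that the paper itself gives no proof of this lemma (it is imported verbatim from \cite[Lemma 3.4]{RDqRA25}), so there is nothing in-paper to compare against, and your argument — totality/univalence for the first identity, surjectivity/injectivity (or the converse-based reduction) for the second — is the standard one. One small bookkeeping slip: in your sketch of the mirror argument you credit $\gamma\mathbin{;}\gamma^{\smile}=\mathrm{id}_X$ with supplying, for each $y$, some $z$ with $(z,y)\in\gamma$, but that existence statement comes from $\mathrm{id}_X\subseteq\gamma^{\smile}\mathbin{;}\gamma$ (the uniqueness of such $z$ is what $\gamma\mathbin{;}\gamma^{\smile}\subseteq\mathrm{id}_X$ gives you); this is harmless since $\gamma$ is a bijection and all four properties are available, but it is exactly the kind of conflation you said you wanted to avoid.
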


As in Section~\ref{sec:embedDRL}, for a poset  $\mathbf X = \langle X, \leqslant\rangle$, we will consider the distributive residuated lattice 
$\langle \mathsf{Up}(X^2,\preccurlyeq), \cap, \cup, \mathbin{;},\backslash,/, \leqslant \rangle$. 

We will now define how an  order automorphism $\alpha : X \to X $ and a self-inverse dual order automorphism $\beta : X \to X $ are then used to define the unary operations ${\sim}$, $-$ and $\neg$. Details can be found in~\cite[Section 3]{RDqRA25}.


\begin{theorem}{\normalfont \cite[Theorems 3.12 and 3.15]{RDqRA25}}\label{Theorem:Dq(E)}
Let $\mathbf{X}=\langle X,\leqslant\rangle$ be a poset and $\alpha: X \to X$ an order automorphism of $\mathbf X$. Set $1={\leqslant}$ and for 
$R \in \mathsf{Up}(X^2,\preccurlyeq)$, define
${\sim} R = R^{c\smile}\mathbin{;} \alpha$, $-R = \alpha \mathbin{;} R^{c\smile}$. Then 
$$ \langle \mathsf{Up}(X^2,\preccurlyeq), \cap, \cup, \mathbin{;},1, -, {\sim}\rangle 
$$
is a DInFL-algebra and it is cyclic iff $\alpha$ is the identity. Further, if $\beta: X \to X$ is self-inverse dual order automorphism of $\mathbf X$ such that  $\beta = \alpha \mathbin{;} \beta\mathbin{;} \alpha$, then defining 
$\neg R =  \alpha\mathbin{;} \beta \mathbin{;} R^c \mathbin{;} \beta$ we get 
$$ 
\left\langle \mathsf{Up}(X^2,\preccurlyeq),\cap, \cup, \mathbin{;}, 1, -, {\sim}, {\neg} \right\rangle$$
is a distributive quasi relation algebra. 
\end{theorem}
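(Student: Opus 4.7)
The plan is to verify each listed identity by direct relational calculation, with Lemma~\ref{Lemma:ComplementComposition} as the main workhorse. Viewing $\alpha$ and $\beta$ as graphs, both are bijections, and $\beta^\smile = \beta$ since $\beta$ is self-inverse; hence both satisfy the hypothesis of that Lemma, so complements can be moved freely past composition with $\alpha$, $\alpha^\smile$, or $\beta$. Before the main calculations I would check that ${\sim}R$, ${-}R$, $\neg R$ lie in $\mathsf{Up}(X^2,\preccurlyeq)$: this follows from $\alpha$ being an order- and $\beta$ a dual-order-automorphism of $\mathbf{X}$, so that their (co-)compositions respect the twisted order.

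For the DInFL assertion I would first verify involution: unfolding ${\sim}({-}R) = (\alpha \mathbin{;} R^{c\smile})^{c\smile} \mathbin{;} \alpha$ and pushing $^c$ past $\alpha$ via Lemma~\ref{Lemma:ComplementComposition} reduces this to $R \mathbin{;} \alpha^\smile \mathbin{;} \alpha = R$, and symmetrically for ${-}{\sim}R$. Next I would show
\[
T/S \;=\; {-}(S \mathbin{;} {\sim}T) \qquad \text{and} \qquad R \backslash T \;=\; {\sim}({-}T \mathbin{;} R),
\]
each being a short chain of Lemma~\ref{Lemma:ComplementComposition} applications together with $\alpha \mathbin{;} \alpha^\smile = \mathrm{id}_X$; combined with the known residuation of $\mathbin{;}$ this yields~(\ref{eq:InFL}). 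For cyclicity, $\alpha = \mathrm{id}_X$ immediately gives ${-}R = R^{c\smile} = {\sim}R$; conversely, evaluating ${-}R = {\sim}R$ on suitable test up-sets such as principal up-sets of the form ${\uparrow}(a,b)$ produces enough equations $\alpha \mathbin{;} S = S \mathbin{;} \alpha$ to force $\alpha = \mathrm{id}_X$.

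For the DqRA extension I would verify involutivity of $\neg$, the De Morgan law \textsf{(Dm)}, and \textsf{(Dp)}. Unfolding yields $\neg\neg R = \alpha \mathbin{;} \beta \mathbin{;} \alpha \mathbin{;} \beta \mathbin{;} R \mathbin{;} \beta \mathbin{;} \beta$: the trailing $\beta^2$ equals $\mathrm{id}_X$, and the hypothesis $\alpha \mathbin{;} \beta \mathbin{;} \alpha = \beta$ combined with $\beta^2 = \mathrm{id}_X$ collapses the leading $\alpha \mathbin{;} \beta \mathbin{;} \alpha \mathbin{;} \beta$ to $\mathrm{id}_X$, so $\neg\neg R = R$. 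Condition \textsf{(Dm)} reduces to the fact that composition by bijective relations commutes with intersection. For \textsf{(Dp)}, direct computation gives
\[
{-}\neg S \mathbin{;} {-}\neg R \;=\; \alpha \mathbin{;} \beta \mathbin{;} S^\smile \mathbin{;} \beta \mathbin{;} \alpha^\smile \mathbin{;} \alpha \mathbin{;} \beta \mathbin{;} R^\smile \mathbin{;} \beta \mathbin{;} \alpha^\smile \;=\; \alpha \mathbin{;} \beta \mathbin{;} (R \mathbin{;} S)^\smile \mathbin{;} \beta \mathbin{;} \alpha^\smile,
\]
after cancelling the middle $\beta \mathbin{;} \alpha^\smile \mathbin{;} \alpha \mathbin{;} \beta = \mathrm{id}_X$; applying ${\sim}$ and using Lemma~\ref{Lemma:ComplementComposition} to move $^c$ outward, then the trailing $\alpha^\smile \mathbin{;} \alpha$, reduces the right-hand side to $\alpha \mathbin{;} \beta \mathbin{;} (R \mathbin{;} S)^c \mathbin{;} \beta = \neg(R \mathbin{;} S)$, as required.

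The main obstacle is bookkeeping: each identity is a short manipulation, but keeping track of the many occurrences of $^c$, $^\smile$, $\alpha$, $\alpha^\smile$, and $\beta$ requires care, particularly where Lemma~\ref{Lemma:ComplementComposition} is invoked inside composites, and one must use the compatibility hypothesis $\alpha \mathbin{;} \beta \mathbin{;} \alpha = \beta$ at precisely the right moment for $\neg\neg R = R$. Distributivity of the underlying lattice is inherited from $\mathsf{Up}(X^2,\preccurlyeq)$, so nothing further is needed for the lattice reduct.
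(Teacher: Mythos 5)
Your proposal is essentially correct, but note that the paper does not prove this theorem at all: it is imported verbatim by citation from~\cite{RDqRA25} (Theorems 3.12 and 3.15), so there is no in-paper proof to compare against. Your direct relational verification is the natural reconstruction, and the individual computations check out: the involution laws reduce to $\alpha\mathbin{;}\alpha^{\smile}=\mathrm{id}$ after moving $^c$ past $\alpha$ via Lemma~\ref{Lemma:ComplementComposition} and using $(R^{c\smile})^c=R^{\smile}$; the identities $T/S={-}(S\mathbin{;}{\sim}T)$ and $R\backslash T={\sim}({-}T\mathbin{;}R)$ do collapse to the defining formulas $(T^c\mathbin{;}S^{\smile})^c$ and $(R^{\smile}\mathbin{;}T^c)^c$; and your \textsf{(Dp)} computation correctly uses $\beta\mathbin{;}\alpha^{\smile}\mathbin{;}\alpha\mathbin{;}\beta=\mathrm{id}$ in the middle and $\alpha\mathbin{;}\beta\mathbin{;}\alpha=\beta$ together with $\beta^2=\mathrm{id}$ for $\neg\neg R=R$. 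The one place where your sketch is thinnest is the converse of the cyclicity claim: ``enough equations to force $\alpha=\mathrm{id}$'' should be made concrete. It does work: taking $R={\uparrow}_{\preccurlyeq}(a,b)$ one finds $(w,v)\in{-}R$ iff $v\not\leqslant a$ or $b\not\leqslant\alpha(w)$, while $(w,v)\in{\sim}R$ iff $\alpha^{-1}(v)\not\leqslant a$ or $b\not\leqslant w$; choosing $a=v$ and $b=\alpha(w)$ makes the first disjunction false, so the second must fail too, yielding $\alpha(w)\leqslant w$ and $w\leqslant\alpha(w)$ for all $w$. Finally, do not forget to check (as you briefly note) that ${\sim}R$, ${-}R$ and $\neg R$ land in $\mathsf{Up}(X^2,\preccurlyeq)$; this uses that $R^{c\smile}$ is again a $\preccurlyeq$-up-set when $R$ is, which is exactly where the twisted order earns its keep.
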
	

Algebras of the form described by Theorem~\ref{Theorem:Dq(E)} are called 
\emph{full} DInFL-algebras or \emph{full} DqRAs 
and the classes of such algebras are denoted by $\mathsf{FDInFL}$ and $\mathsf{FDqRA}$, respectively. 
The definition below appeared first for DqRAs~\cite[Definition 4.5]{RDqRA25} and then later for DInFL-algebras~\cite[Definition 2.14]{CR24}. 


\begin{definition}
{\normalfont\cite{RDqRA25,CR24}}\label{Definition:RDqRA}
A DInFL-algebra $\mathbf{A}=\left\langle A, \wedge, \vee, \cdot, 1, -, {\sim}, \right\rangle$ will be  called \emph{representable} if $\mathbf{A}\in \mathbb{ISP}(\mathsf{FDInFL})$ and a 
DqRA $\mathbf{B} = \left\langle B, \wedge, \vee, \cdot, 1, -, {\sim}, {\neg}\right\rangle$ 
is \emph{representable} if 
$\mathbf{B} \in \mathbb{ISP}\left(\mathsf{FDqRA}\right)$.
\end{definition}

A DqRA $\mathbf{A}$ is \emph{finitely} representable if the poset $\langle X ,\leqslant\rangle$ used in the representation of $\mathbf{A}$ is finite.

\begin{remark}
As for the definition of representable relation algebras, the definition of representable DInFL-algebra and representable DqRA  can be stated in an equivalent way, but where $X^2$ is replaced with an arbitrary equivalence relation $E$ satisfying certain conditions. The setting described above is simpler and still sufficient for our purposes in this paper.     
\end{remark}


\section{Representability results for DInFL-algebras}\label{sec:Rep_DInFL}

Now we  construct a DInFL-algebra from an ipo-monoid. 
Let $\mathbf{P} = \langle P, \leqslant, \cdot, 1, ^-, ^{\sim}\rangle$ be an ipo-monoid. For each $U \in \mathsf{Up}\left(P, \leqslant\right)$, define 
$-U = \left\{x^- \mid x \notin U\right\}$ and ${\sim}U = \left\{x^{\sim}\mid x \notin U\right\}$. Below 
we show that $\textsf{Up}\left(P, \leqslant\right)$ is closed under $\sim$ and $-$. 

\begin{lemma}\label{lem:-U&~U_upsets}
Let $\mathbf{P} = \langle P, \leqslant, \cdot, 1, ^-, ^{\sim}\rangle$ be an ipo-monoid. If $U \in \mathsf{Up}\left(P, \leqslant\right)$, then ${\sim}U \in \mathsf{Up}\left(P, \leqslant\right)$ and $-U \in \mathsf{Up}\left(P, \leqslant\right)$. 
\end{lemma}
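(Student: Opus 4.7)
The plan is to use the two key facts from Lemma~\ref{lem:properties_ipo-mon}: namely, the involutivity $x^{-\sim}=x=x^{\sim-}$ (item (i)) and the order-reversing property $x\leqslant y \Leftrightarrow y^-\leqslant x^- \Leftrightarrow y^\sim\leqslant x^\sim$ (item (ii)). The two claims are completely symmetric, so I would handle ${\sim}U$ in detail and then observe that the argument for $-U$ is dual.

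First I would fix $U\in\mathsf{Up}(P,\leqslant)$ and let $a\in{\sim}U$ with $a\leqslant b$; the goal is to exhibit some $y\notin U$ with $y^{\sim}=b$, which will show $b\in{\sim}U$. By definition of ${\sim}U$, there exists $x\notin U$ with $a=x^{\sim}$. The natural candidate for $y$ is $b^-$, since then item (i) gives $y^{\sim}=b^{-\sim}=b$ automatically. So it only remains to check that $b^-\notin U$.

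The key step is the following chain: from $a\leqslant b$, item (ii) gives $b^-\leqslant a^-=x^{\sim-}=x$, where the last equality is again item (i). Now if $b^-$ were in $U$, then since $U$ is an up-set and $b^-\leqslant x$, we would have $x\in U$, contradicting $x\notin U$. Hence $b^-\notin U$, so $b=(b^-)^\sim\in{\sim}U$, as required.

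The argument for $-U$ is entirely parallel: given $a\in -U$ and $a\leqslant b$, write $a=x^-$ with $x\notin U$, set $y=b^\sim$, note $y^-=b^{\sim-}=b$, and check $y\notin U$ by applying item (ii) to $a\leqslant b$ to get $b^\sim\leqslant a^\sim=x^{-\sim}=x$, which forces $b^\sim\notin U$ by the up-set property of $U$. I do not anticipate any real obstacle here: the proof is a routine application of involutivity and order-reversal, and the only mild care needed is to pick the right preimage ($b^-$ in one case, $b^\sim$ in the other) so that applying the opposite negation recovers $b$.
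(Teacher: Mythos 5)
Your proof is correct and follows essentially the same route as the paper's: both arguments take an element of ${\sim}U$, use involutivity to identify its preimage under $^{\sim}$ as an element outside $U$, apply the order-reversal of $^-$ to the hypothesis $a\leqslant b$, and invoke the up-set property of $U$ to conclude. The choice of $b^-$ (resp.\ $b^{\sim}$) as the witness is exactly the step the paper performs when it writes $y^{-\sim}\in{\sim}U$.
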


\begin{proof}
Let $U \in \mathsf{Up}\left(P, \leqslant\right)$. To see that ${\sim}U \in  \mathsf{Up} \left(P, \leqslant\right)$, let $x\in {\sim}U$ and $y \in P$, and assume $x \leqslant y$. Since $x \in {\sim}U$, it follows from item  (i) of Lemma~\ref{lem:properties_ipo-mon} that $x^{-{\sim}} \in {\sim}U$, and hence $x^{-} \notin U$. Since $x\leqslant y$, we have $y^- \leqslant x^-$ by item (ii) of Lemma~\ref{lem:properties_ipo-mon}. Hence, since $U$ is an upset of $\langle P, \leqslant\rangle$, it follows that $y^- \notin U$. This shows that $y^{-\sim} \in {\sim}U$. Consequently, $y\in {\sim}U$. In a similar way we can show that $-U$ is an upset of $\langle P, \leqslant\rangle$. 
\qed
\end{proof}

Recall that we define $U \bullet V = {\uparrow}\left\{\,x y \mid x \in U , y \in V\,\right\}$ for all $U, V \subseteq P$. Since $\left\langle \mathsf{Up}\left(P, \leqslant\right), \cap, \cup\right\rangle$ is a distributive lattice and  $\left\langle\mathsf{Up}\left(P, \leqslant\right),  \bullet, {\uparrow}1\right\rangle$ is a monoid by Proposition~\ref{prop:RLUPP}, it will follow that $\left\langle \mathsf{Up}\left(P, \leqslant\right), \cap, \cup, \bullet, {\uparrow}1, -, \sim\right\rangle$ is a DInFL-algebra if we can show that that $\sim$ and $-$ satisfy (\ref{eq:InFL}).

\begin{theorem}\label{thm:DInFL_from_ipo-monoid}
Let $\mathbf{P} = \langle P, \leqslant, \cdot, 1, ^-, ^{\sim}\rangle$ be an ipo-monoid. Then the structure $\mathcal{D}(\mathbf{P})=\left\langle \mathsf{Up}\left(P, \leqslant\right), \cap, \cup, \bullet, {\uparrow}1, -, \sim\right\rangle$ is a DInFL-algebra. Moreover, the algebra $\mathcal{D}(\mathbf{P})$
is cyclic iff $\mathbf{P}$ is cyclic. 
\end{theorem}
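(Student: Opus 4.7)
The plan is to leverage Proposition~\ref{prop:RLUPP}, which already equips $\langle \mathsf{Up}(P,\leqslant), \cap, \cup, \bullet, {\uparrow}1, /, \backslash\rangle$ with the structure of a distributive residuated lattice, together with Lemma~\ref{lem:-U&~U_upsets}, which makes $-$ and ${\sim}$ well-defined on $\mathsf{Up}(P,\leqslant)$. By the Galatos--Jipsen term equivalence for InFL-algebras recalled in~(\ref{eq:InFL}), it will then suffice to verify the three-way equivalence
\[U \bullet V \subseteq W \iff U \subseteq -(V \bullet {\sim}W) \iff V \subseteq {\sim}(-W \bullet U)\]
for all $U,V,W \in \mathsf{Up}(P,\leqslant)$. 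Once this is in place, the involutive identities ${\sim}{-}U = U = {-}{\sim}U$ and the identification of the constant $0$ with $-{\uparrow}1 = {\sim}{\uparrow}1$ fall out automatically, and distributivity is inherited from the up-set lattice.

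The key technical lemma I would prove first is the elementwise characterisation $z \in {\sim}U$ iff $z^- \notin U$, with the dual $z \in -U$ iff $z^\sim \notin U$. Both are immediate from the definitions of $-U$ and ${\sim}U$ together with the identities $x^{-\sim} = x = x^{\sim-}$ of Lemma~\ref{lem:properties_ipo-mon}(i) and the up-set property. From this characterisation I would unfold $U \subseteq -(V \bullet {\sim}W)$ into: for every $u \in U$, no pair $v \in V$, $w \in {\sim}W$ satisfies $vw \leqslant u^\sim$. Applying the residuation clause of Lemma~\ref{lem:properties_ipo-mon}(iv) rewrites $vw \leqslant u^\sim$ equivalently as $w \leqslant (uv)^\sim$; since ${\sim}W$ is an up-set, the existence of such $w \in {\sim}W$ collapses to $(uv)^\sim \in {\sim}W$, which by the characterisation is just $uv \in W$. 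This yields the equivalence with $U \bullet V \subseteq W$, and the remaining equivalence involving ${\sim}(-W \bullet U)$ follows by an entirely symmetric argument using the other residuation clause of Lemma~\ref{lem:properties_ipo-mon}(iv).

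For the cyclicity claim, the $(\Leftarrow)$ direction is immediate: if $x^- = x^\sim$ throughout $\mathbf{P}$, then $-U$ and ${\sim}U$ coincide as sets for every $U \in \mathsf{Up}(P,\leqslant)$. For the converse, I would apply the hypothesis $-U = {\sim}U$ to the principal up-set $U = {\uparrow}y$: the characterisation above yields $-{\uparrow}y = P \setminus {\downarrow}(y^-)$ and ${\sim}{\uparrow}y = P \setminus {\downarrow}(y^\sim)$, so the equality forces $y^- = y^\sim$ for every $y \in P$. The main obstacle throughout is the residuation verification in the second paragraph: in a non-cyclic ipo-monoid the compositions $\sim\sim$ and $--$ need not be the identity, so I must be vigilant not to collapse them, relying only on $x^{-\sim} = x = x^{\sim-}$ together with the residuation equivalences of Lemma~\ref{lem:properties_ipo-mon}(iv) to shuttle between inequalities involving $(\cdot)^\sim$ and $(\cdot)^-$.
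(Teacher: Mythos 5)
Your proposal is correct in substance but takes a genuinely different, and arguably cleaner, route than the paper. The paper proves the three-way equivalence~(\ref{eq:InFL}) by four separate element-chasing contradiction arguments: for each direction it picks an element, supposes it lands outside the target set, and shuttles inequalities through~(\ref{eq:ipo}) until a contradiction appears. You instead establish the biconditional characterisation $z \in {\sim}U$ iff $z^- \notin U$ (and dually for $-$), which is immediate from $x^{\sim-}=x=x^{-\sim}$, and then unfold each containment into the single first-order condition ``for all $u \in U$, $v \in V$: $uv \in W$'' via Lemma~\ref{lem:properties_ipo-mon}(iv) and the up-set property of ${\sim}W$; this yields all equivalences at once rather than one implication at a time. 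Your cyclicity argument via $-{\uparrow}y = P\setminus{\downarrow}(y^-)$ and ${\sim}{\uparrow}y = P\setminus{\downarrow}(y^{\sim})$ is likewise a compact repackaging of the paper's witness ${\uparrow}x^{\sim}$, and it checks out. One slip to fix: in your second paragraph you assert that $(uv)^{\sim} \in {\sim}W$ ``is just $uv \in W$''; by your own characterisation it is $uv \notin W$ (since $(uv)^{\sim-}=uv$). The negation in ``no pair $v\in V$, $w\in{\sim}W$ satisfies $vw\leqslant u^{\sim}$'' then correctly produces ``$uv\in W$ for all $v\in V$'', so the conclusion you draw is right, but the intermediate sentence as written is false and, taken literally, would derive the wrong condition. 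With that sign corrected the argument is complete.
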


\begin{proof}
We have to show that $T\bullet U \subseteq V$ iff $U \subseteq {\sim}\left(-V \bullet T\right)$ iff $T \subseteq -\left(U\bullet {\sim}V\right)$ for all $T, U, V \in  \mathsf{Up}\left(P, \leqslant\right)$. First, assume $T\bullet U \subseteq V$ and let $y \in U$. Suppose 
that $y \notin {\sim}\left(-V \bullet T\right)$. Then $y^- \in -V \bullet T$. Hence, there is some $z \in -V$ and $x \in T$ such that $zx \leqslant y^-$. It follows that $zxy  \leqslant y^- y$. Now $y \leqslant y$, so $y^- y \leqslant 0$ by (1), and therefore we obtain $z xy  \leqslant 0$. 
This is equivalent to $z^{\sim -}x y \leqslant 0$, and therefore, by (1), $xy \leqslant z^{\sim}$. 
Since $x \in T$ and  $y \in U$, we get $z^{\sim} \in T\bullet U \subseteq V$. This shows that $z^{\sim -} \notin -V$, which implies $z \notin -V$, a contradiction. 

Now assume $U \subseteq {\sim}\left(-V \bullet T\right)$ and let $z \in T \bullet U$. Suppose $z \notin V$. Then $z^- \in - V$. Since $z \in T \bullet U$, there exist $x \in T$ and $y \in U$ such that $xy \leqslant z$. Hence, $z^- x y \leqslant z^-z$, and so, since $z \leqslant z$ implies $z^- z \leqslant 0$, we get $z^- x y \leqslant 0$.  This is equivalent to $z^- x y^{-\sim} \leqslant 0$, and consequently $z^- x \leqslant y^-$. Now $z^- \in -V$ and $x \in T$, so $y^{-} \in -V\bullet T$, which means $y = y^{-\sim} \notin {\sim}\left(-V \bullet T\right)$. 
Thus, since $U \subseteq {\sim}\left(-V \bullet T\right)$, we get $y \notin U$, which is a contradiction. 

Next assume $T\bullet U \subseteq V$ and let $x \in T$. Suppose that $x \notin {-}\left(U \bullet {\sim} V\right)$. This implies $x^{\sim} \in U \bullet {\sim} V$. Hence, there is some $y \in U$ and $z \in {\sim} V$ such that $y z \leqslant x^{\sim}$. It follows that $x y z \leqslant x x^{\sim}$, and therefore, since $x \leqslant x$ implies $x  x^{\sim} \leqslant 0$, we have $x y z \leqslant 0$. This is equivalent to 
$x y z^{-\sim} \leqslant 0$, and therefore $x y \leqslant z^-$. 
Hence, since $x \in T$ and $y \in U$, we get $z^- \in T\bullet U\subseteq V$, which means $z = z^{- \sim} \notin {\sim}V$, which is a contradiction. 

For the converse implication, assume $T \subseteq -\left(U\bullet {\sim}V\right)$ and let $z \in T\bullet U$. Suppose $z \notin V$. Since $z\in T\bullet U$, there exist $x \in T$ and $y \in U$ such that $xy \leqslant z$. Hence, we get $x y z^{\sim} \leqslant z  z^{\sim}$, and so, since $z z^{\sim} \leqslant 0$, we obtain $x y z^{\sim} \leqslant 0$. This is equivalent to $x^{\sim -} y z^{\sim} \leqslant 0$, and consequently $y z^{-} \leqslant x^{\sim}$ by (1). 
Now recall that $z \notin V$, so $z^\sim \in {\sim}V$, and therefore, since it is also the case that $y \in U$, we get $x^\sim \in U\bullet {\sim}V$. This shows that $x = x^{\sim -} \notin -\left(U \bullet {\sim}V\right)$. But $T \subseteq -\left(U\bullet {\sim}V\right)$, so $x \notin T$, which is a contradiction. 

It is straightforward to show that if $x^{\sim} = x^-$ for all $x \in P$, then the algebra $\left\langle \mathsf{Up}\left(P, \leqslant\right), \cap, \cup, \bullet, {\uparrow}1, -, \sim\right\rangle$ is cyclic. For the converse implication, suppose there is some $x \in P$ such that $x^\sim \neq x^-$. Then $x^\sim \not\leqslant x^-$ or $x^- \not\leqslant x^{\sim}$. Assume the first. This means $x^{\sim} \in {\uparrow}x^{\sim}$ and $x^- \notin {\uparrow}x^{\sim}$,
i.e. 
$x^{\sim -} \notin - {\uparrow}x^{\sim}$ and $x^{- \sim} \in {\sim}{\uparrow}x^{\sim}$. Hence $\mathcal{D}(\mathbf{P})$ is not cyclic. A similar proof applies if $x^- \nleqslant x^{\sim}$. 
\qed
\end{proof}

We say that a DInFL-algebra $\mathbf{A}$ is  a \emph{pregroup DInFL-algebra} if it is isomorphic to a subalgebra  of the algebra $\mathcal{D}(\mathbf{P})$  for some pregroup $\mathbf{P} = \langle P, \leqslant, \cdot, 1, ^{\ell}, ^r\rangle$. 

It is also possible to construct a DInFL-algebra of binary relations using an ipo-monoid.
Fix an ipo-monoid $\mathbf{P} = \langle P, \leqslant, \cdot, 1, ^{-}, ^{\sim}\rangle$ and define $\alpha: P \to P$ by setting, for all $x \in P$, $\alpha(x) = x^{\sim\sim}$. 
It follows from Lemma~\ref{lem:properties_ipo-mon} 
that 
%
 $\alpha$ is an order automorphism of $\langle  P, \leqslant\rangle $. 
%
%
Hence by Theorem~\ref{Theorem:Dq(E)} we obtain the following result.

\begin{theorem}\label{thm:DInFL_from_PtimesP}
Let $\mathbf P = \langle P, \leqslant, \cdot, 1, ^-, ^{\sim}\rangle$ be an ipo-monoid. Then the algebra of binary relations $\left\langle \mathsf{Up}\left(P^2, \preccurlyeq\right), \cap, \cup, \mathbin{;}, \leqslant, -, \sim\right\rangle$ is a DInFL-algebra. 
\end{theorem}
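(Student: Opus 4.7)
The plan is to derive this theorem as a direct application of the first part of Theorem~\ref{Theorem:Dq(E)}, instantiated with the poset $\langle P, \leqslant \rangle$ and the map $\alpha : P \to P$ given by $\alpha(x) = x^{\sim\sim}$, which was introduced in the paragraph preceding the statement. Since Theorem~\ref{Theorem:Dq(E)} produces a DInFL-algebra on $\mathsf{Up}(X^2, \preccurlyeq)$ whenever $\alpha$ is an order automorphism of the underlying poset, essentially the whole content of the present proof is the verification that $\alpha$ is an order automorphism of $\langle P, \leqslant \rangle$.

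I would carry this out in two steps. First, bijectivity: Lemma~\ref{lem:properties_ipo-mon}(i) gives $x^{-\sim} = x = x^{\sim -}$ for every $x \in P$, which is precisely the statement that the unary operations $(\cdot)^\sim$ and $(\cdot)^-$ are mutually inverse bijections of $P$. Consequently, $\alpha = (\cdot)^\sim \circ (\cdot)^\sim$ is itself a bijection of $P$, with inverse $(\cdot)^- \circ (\cdot)^-$. Second, order preservation: Lemma~\ref{lem:properties_ipo-mon}(ii) says that $(\cdot)^\sim$ is order reversing, so the composition of it with itself is order preserving.

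With $\alpha$ thus identified as an order automorphism of $\langle P, \leqslant \rangle$, a direct invocation of the first part of Theorem~\ref{Theorem:Dq(E)} gives that $\langle \mathsf{Up}(P^2, \preccurlyeq), \cap, \cup, \mathbin{;}, \leqslant, -, \sim \rangle$ is a DInFL-algebra, with the unary operations on binary relations being those supplied by that theorem, namely $\sim R = R^{c\smile}\mathbin{;} \alpha$ and $-R = \alpha \mathbin{;} R^{c\smile}$. There is essentially no obstacle beyond this one observation: the real work of checking that these operations satisfy the involutive residuation equivalences (\ref{eq:InFL}) on up-sets of $P^2$ has already been absorbed into Theorem~\ref{Theorem:Dq(E)}, and distributivity of the underlying lattice of up-sets is automatic.
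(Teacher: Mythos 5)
Your proposal is correct and takes essentially the same route as the paper: the paper likewise observes that $\alpha(x)=x^{\sim\sim}$ is an order automorphism of $\langle P,\leqslant\rangle$ (citing Lemma~\ref{lem:properties_ipo-mon}) and then derives the theorem directly from Theorem~\ref{Theorem:Dq(E)}. Your explicit verification of bijectivity via Lemma~\ref{lem:properties_ipo-mon}(i) and of order preservation via Lemma~\ref{lem:properties_ipo-mon}(ii) merely fills in details the paper leaves implicit.
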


We will now show that pregroup DInFL-algebras are representable. We do this by 
proving  
that the DInFL-algebra $\mathcal{D}(\mathbf{P})$ 
constructed from a pregroup $\mathbf{P}$
can be embedded into the algebra  $\left\langle \mathsf{Up}\left(P^2, \preccurlyeq\right), \cap, \cup, \mathbin{;}, \leqslant, -, \sim\right\rangle$.  Consider the map $\sigma : \mathsf{Up}(P,\leqslant) \to \mathsf{Up}(P^2,\preccurlyeq)$ defined by $\sigma(U)= {\uparrow}\{\, (g, g u) \mid g \in P, u \in U\, \}$ for all $U\in \mathsf{Up}(P,\leqslant)$. In Section~\ref{sec:embedDRL} it was shown that $\sigma$ is injective and that it preserves meets, joins, the monoid operation and its identity. We therefore just have to check if $\sigma$ preserves the unary operations $\sim$ and $-$. To show that $\sigma$ preserves $\sim$ we will need the fact that 
$\alpha^{-1}(x) = x^{\ell\ell}$ for all $x \in P$.  This follows immediately from the definition of $\alpha$ and Proposition~\ref{prop:Properties_of_pregroups}.




Below we will make repeated use of Lemma~\ref{lem:alt_sigma}(i) and the defining properties of pregroups without mention.  
\begin{lemma}\label{lem:sigma_preserves_linear_negations}
For all $U\in \mathsf{Up}\left(P, \leqslant\right)$, we have $\sigma\left(-U\right) = -\sigma\left(U\right)$ and $\sigma\left({\sim}U\right) = {\sim}\sigma\left(U\right)$.  
\end{lemma}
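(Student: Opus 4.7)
My plan is to work out exactly what membership in each side of the two equations looks like in pregroup language, and then verify the set equalities pointwise. Since $\mathbf{P}$ is a pregroup, $x^\sim = x^r$ and $x^- = x^\ell$, so the automorphism $\alpha$ from Theorem~\ref{Theorem:Dq(E)} satisfies $\alpha(x) = x^{rr}$ and $\alpha^{-1}(x) = x^{\ell\ell}$. Viewing $\alpha$ as its graph and using that $\alpha \mathbin{;} \alpha^\smile = \alpha^\smile \mathbin{;} \alpha = \mathrm{id}_P$, the formulas $-R = \alpha \mathbin{;} R^{c\smile}$ and $\sim\! R = R^{c\smile} \mathbin{;} \alpha$ simplify: $(p,q) \in -\sigma(U)$ iff $(q, p^{rr}) \notin \sigma(U)$, which by Lemma~\ref{lem:alt_sigma}(i) is equivalent to \emph{for every $u \in U$}, $qu \not\leqslant p^{rr}$; and $(p,q) \in {\sim}\sigma(U)$ iff $(q^{\ell\ell}, p) \notin \sigma(U)$, i.e.\ \emph{for every $u \in U$}, $q^{\ell\ell} u \not\leqslant p$. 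On the other side, Lemma~\ref{lem:alt_sigma}(i) gives $(p,q) \in \sigma(-U)$ iff \emph{there exists $x \notin U$} with $p x^\ell \leqslant q$, and $(p,q) \in \sigma({\sim}U)$ iff \emph{there exists $x \notin U$} with $p x^r \leqslant q$.

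For the $\sim$ equation I would handle the two directions by residuation in the pregroup. Assuming $p x^r \leqslant q$ with $x \notin U$, and for contradiction $q^{\ell\ell} u \leqslant p$ for some $u \in U$, concatenating gives $q^{\ell\ell} u x^r \leqslant q$. Multiplying by $q^\ell$ on the left and using $q^\ell q^{\ell\ell} \geqslant 1$ together with $q^\ell q \leqslant 1$ yields $u x^r \leqslant 1$, from which $u \leqslant u x^r x \leqslant x$ by $1 \leqslant x^r x$; since $U$ is an up-set this forces $x \in U$, a contradiction. For the reverse direction I would take the witness $x := (p^r q)^\ell$: then $x^r = p^r q$ so $p x^r = p p^r q \leqslant q$ by $p p^r \leqslant 1$; and if this $x$ were in $U$, taking $u = x$ gives $q^{\ell\ell} u = q^{\ell\ell} q^\ell p \leqslant p$ by $q^{\ell\ell} q^\ell \leqslant 1$, contradicting the right-hand side characterisation.

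The argument for $-$ is completely parallel, using the witness $x := q^r p^{rr}$, for which $(xy)^\ell = y^\ell x^\ell$ and $x^{\ell r} = x$ give $x^\ell = p^r q$, hence $p x^\ell = p p^r q \leqslant q$; if such an $x$ lay in $U$ we would get $q x = q \, q^r p^{rr} \leqslant p^{rr}$. The forward direction uses that $p x^\ell u \leqslant p^{rr}$ implies, after multiplying by $p^r$ on the left, $x^\ell u \leqslant p^r p^{rr} \leqslant 1$ (by $x x^r \leqslant 1$ with $x = p^r$), and then $u \leqslant x x^\ell u \leqslant x$.

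I expect the main obstacle to be bookkeeping rather than conceptual depth: the asymmetry between $x^\ell x \leqslant 1 \leqslant x x^\ell$ and $x x^r \leqslant 1 \leqslant x^r x$ means that picking the right side on which to multiply, and the right witness (either $(p^r q)^\ell$ or $q^r p^{rr}$), is what makes the calculation go through. Once the two correct witnesses are chosen, the remaining verifications are a short sequence of pregroup manipulations.
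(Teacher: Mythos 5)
Your proposal is correct and takes essentially the same route as the paper's proof: the same membership characterisations via Lemma~\ref{lem:alt_sigma} and the relations $x^{\sim}=x^{r}$, $x^{-}=x^{\ell}$, the same witnesses ($(p^{r}q)^{\ell}=q^{\ell}p$ for ${\sim}$ and $q^{r}p^{rr}$ for $-$), and the same contradiction via $u\leqslant x$ against the up-set property of $U$.
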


\begin{proof}
Let $\left(x, y\right) \in \sigma\left(-U\right)$. Then there is some $w \in -U$ such that $xw\leqslant y$. By definition of $-U$, $w=v^{\ell}$ with $v \notin U$. 
Hence $v^{\ell} \leqslant x^rxv^{\ell} \leqslant x^ry$.
Now suppose $\left(x, y\right) \notin -\sigma\left(U\right)$. 
Then $\left(x, y\right) \notin \alpha\, ;\sigma\left(U\right)^{c\smile} = \left(\alpha\, ; \sigma\left(U\right)^\smile\right)^c$, and so $\left(x, y\right) \in \alpha\, ; \sigma\left(U\right)^\smile$. From this we obtain $\left(y, \alpha\left(x\right)\right) \in \sigma\left(U\right)$, i.e., $\left(y, x^{rr}\right) \in \sigma\left(U\right)$. 
Hence there exists $u \in U$ such that $yu \leqslant x^{rr}$, and so $x^ryu \leqslant x^rx^{rr} \leqslant 1$. From $v^{\ell} \leqslant x^{r}y$ we get $v^{\ell}u \leqslant x^ryu$, so $v^{\ell}u\leqslant 1 $, which gives $u=1\cdot u \leqslant vv^{\ell}u \leqslant v$, a contradiction since $U$ an up-set and $v \notin U$. Thus $(x,y) \in -\sigma(U)$.

Conversely, let $\left(x, y\right) \in -\sigma\left(U\right)$. Then $\left(x, y\right) \in \alpha\mathbin{;} \sigma\left(U\right)^{c\smile}$, and therefore $\left(x^{rr}, y\right) \in \sigma\left(U\right)^{c\smile}$, which means $\left(y, x^{rr}\right) \notin \sigma\left(U\right)$. This implies that for all $u \in U$, $yu \nleqslant x^{rr}$. To show  $\left(x, y\right) \in \sigma\left(-U\right)$, we must find $w\in -U$ such that $xw \leqslant y$. Recall $w \in -U$ implies  $w=v^{\ell}$ where $v \notin U$. We have $yy^{r}x^{rr}\leqslant x^{rr}$ so $y^rx^{rr} \notin U$. Hence $(y^{r}x^{rr})^{\ell}=x^{rr\ell}y^{r\ell}=x^ry \in - U$. Now $xx^ry\leqslant 1\cdot y =y$, so $(x,y)\in \sigma(-U)$. 

Now let $(x, y) \in \sigma({\sim}U)$. Then there is some $w \in {\sim}U$ such that $xw \leqslant y$. 
By definition of ${\sim}U$, $w=v^{\sim}$ with $v \notin U$. We get $y^{\ell}xv^{r} \leqslant y^{\ell}y \leqslant 1$. Then suppose $(x, y) \notin {\sim}\sigma(U) = \sigma(U)^{c\smile}\mathbin{;}\alpha = (\sigma(U)^\smile\mathbin{;} \alpha)^c$. It thus follows that $(x, y) \in \sigma(U)^\smile\mathbin{;}\alpha$, which means $(\alpha^{-1}(y), x) \in \sigma(U)$, i.e., $(y^{\ell\ell}, x) \in \sigma(U)$. 
So, there exists $u \in U$ such that $y^{\ell\ell}u\leqslant x$. Now 
$u \leqslant y^{\ell}y^{\ell \ell}u \leqslant y^{\ell}x$. Combining with the earlier inequality we get $uv^{r}\leqslant 1$ and thus $u\leqslant uv^{r}v\leqslant v$. Since $U$ an up-set, this contradicts $v \notin U$. Hence  $(x,y)\in {\sim}\sigma(U)$.

Conversely, let $\left(x, y\right) \in {\sim}\sigma\left(U\right) =  \sigma\left(U\right)^{c\smile} \mathbin{;} \alpha$. Then $\left(x, y^{\ell\ell}\right) \in \sigma\left(U\right)^{c\smile}$, and so $\left(y^{\ell\ell}, x\right) \notin \sigma\left(U\right)$. 
Hence for all $u \in U$, $y^{\ell \ell}u \nleqslant x$.
To show  $(x, y) \in \sigma({\sim}U)$, we have to find some 
$w \in {\sim}U$ such that $xw \leqslant y$. Any such $w$ is equal to $v^r $ where $v \notin U$. Now $y^{\ell\ell}y^{\ell}\leqslant 1$ so $y^{\ell\ell}y^{\ell}x \leqslant x$. Hence $y^{\ell}x \notin U$, so 
$(y^{\ell}x)^r=x^{r}y^{\ell r}=x^ry \in {\sim}U$ and can be used as the required $w$ since $xx^{r}\leqslant 1$. 
\qed
\end{proof}


The above lemma combines with Theorem~\ref{thm:sigma-pres-meets}(iv) to show that $\mathcal{D}(\mathbf{P})$ 
is representable in the sense of Definition~\ref{Definition:RDqRA}. 

\begin{theorem}\label{thm:D(P)_representable}
Let $\mathbf P = \langle P, \leqslant, \cdot, 1, ^{\ell}, ^r\rangle$ be a pregroup. Then $\mathcal{D}(\mathbf{P})$ is representable. 
Moreoever, every pregroup DInFL-algebra is representable.  
\end{theorem}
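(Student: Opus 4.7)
The plan is to assemble the pieces already established and verify that the map $\sigma$ witnesses a DInFL-algebra embedding of $\mathcal{D}(\mathbf{P})$ into a full DInFL-algebra. First I would observe that since $\mathbf{P}$ is a pregroup, it is an ipo-monoid by Proposition~\ref{prop:pregroups_are_ipo-monoids}, so Theorem~\ref{thm:DInFL_from_ipo-monoid} applies and $\mathcal{D}(\mathbf{P})$ is indeed a DInFL-algebra. Next, Theorem~\ref{thm:DInFL_from_PtimesP} gives that the target algebra $\left\langle \mathsf{Up}(P^2,\preccurlyeq),\cap,\cup,\mathbin{;},\leqslant,-,{\sim}\right\rangle$ is a DInFL-algebra; moreover, since $\alpha(x)=x^{\sim\sim}$ is an order automorphism of $\langle P,\leqslant\rangle$, this algebra is a full DInFL-algebra in the sense of Theorem~\ref{Theorem:Dq(E)}, i.e.\ it lies in $\mathsf{FDInFL}$.

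The second step is to argue that $\sigma$ is a DInFL-algebra embedding. Injectivity, preservation of arbitrary joins, preservation of the fusion $\bullet$, and preservation of the monoid identity ${\uparrow}1$ have already been established for $\sigma$ when the domain is a pomonoid. Preservation of meets follows from Theorem~\ref{thm:sigma-pres-meets}(iv), which is exactly the pregroup case of Condition~W. Preservation of the two linear negations is Lemma~\ref{lem:sigma_preserves_linear_negations}. Putting these together, $\sigma$ embeds $\mathcal{D}(\mathbf{P})$ as a subalgebra of a member of $\mathsf{FDInFL}$, so $\mathcal{D}(\mathbf{P})\in\mathbb{IS}(\mathsf{FDInFL})\subseteq\mathbb{ISP}(\mathsf{FDInFL})$, which is exactly the content of Definition~\ref{Definition:RDqRA}.

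For the second claim, suppose $\mathbf{A}$ is a pregroup DInFL-algebra. By definition, $\mathbf{A}$ is isomorphic to a subalgebra of $\mathcal{D}(\mathbf{P})$ for some pregroup $\mathbf{P}$. Since the class $\mathbb{ISP}(\mathsf{FDInFL})$ is closed under $\mathbb{IS}$, and we have just shown that $\mathcal{D}(\mathbf{P})\in\mathbb{ISP}(\mathsf{FDInFL})$, it follows that $\mathbf{A}\in\mathbb{ISP}(\mathsf{FDInFL})$, i.e.\ $\mathbf{A}$ is representable.

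Since each bullet point has already been proved in the preceding sections, there is no genuine obstacle — the proof is essentially a short bookkeeping argument. The only subtlety worth flagging in the write-up is making explicit that the pregroup hypothesis on $\mathbf{P}$ is used twice: once to invoke Theorem~\ref{thm:sigma-pres-meets}(iv) for meet preservation (through Condition~W), and once in Lemma~\ref{lem:sigma_preserves_linear_negations} where the identities $x^{\ell r}=x=x^{r\ell}$ and $xx^r\leqslant 1\leqslant x^\ell x$ of Proposition~\ref{prop:Properties_of_pregroups} are essential for $\sigma$ to preserve $-$ and ${\sim}$.
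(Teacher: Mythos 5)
Your proposal is correct and matches the paper's argument, which is exactly the one-line assembly stated just before the theorem: Lemma~\ref{lem:sigma_preserves_linear_negations} together with Theorem~\ref{thm:sigma-pres-meets}(iv) shows $\sigma$ is a DInFL-embedding of $\mathcal{D}(\mathbf{P})$ into the full algebra on $\mathsf{Up}(P^2,\preccurlyeq)$, and the ``moreover'' part follows since $\mathbb{ISP}(\mathsf{FDInFL})$ is closed under subalgebras. Your explicit bookkeeping (and the remark on where the pregroup hypothesis is used) is a faithful, slightly more detailed rendering of the same proof.
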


Let $\mathbf{P}$ be an ipo-monoid. We remark that $\mathbf{P}$ being a pregroup is not a necessary condition for $\mathcal{D}(\mathbf{P})$ to be representable. Consider the two-element ipo-monoid with $0<1$. Then $\mathcal{D}(\mathbf{P})$ is isomorphic to the three-element Sugihara chain $\mathbf{S}_3$, which is representable~\cite[Example 5.1]{RDqRA25}. 

The proposition below shows that preservation of $-$ and ${\sim}$ by $\sigma$ happens exactly when $\mathbf{P}$ is a pregroup. 
\begin{proposition}\label{prop:sig_pres_linneg_implies_pregroup}
Let $\mathbf P = \langle P, \leqslant, \cdot, 1, ^-, ^\sim\rangle$ be an ipo-monoid. If the function $\sigma: \mathsf{Up}\left(P, \leqslant\right) \to \mathsf{Up}\left(P^2, \preccurlyeq\right)$ preserves $\sim$ and $-$, then $\mathbf{P}$ is a pregroup. 
\end{proposition}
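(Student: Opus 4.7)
The plan is to show $\mathbf{P}$ is a pregroup by identifying $x^\ell := x^-$ and $x^r := x^\sim$ and verifying the four pregroup inequalities; it turns out only preservation of $\sim$ is really needed. The critical first step is to extract a monoid residuation formula from the equation $\sigma({\sim}{\uparrow}x) = {\sim}\sigma({\uparrow}x)$, evaluated at a generic point $(a,b) \in P^2$. By Lemma~\ref{lem:alt_sigma}, the left side unfolds to ``there exists $y \not\geqslant x$ with $a y^\sim \leqslant b$''; applying Lemma~\ref{lem:properties_ipo-mon}(iv) together with $^{\sim -} = \mathrm{id}$ reduces the inner inequality to $y \geqslant b^- a$, giving LHS $\Leftrightarrow x \not\leqslant b^- a$. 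The right side, using ${\sim}R = R^{c\smile}\mathbin{;}\alpha$ with $\alpha(c) = c^{\sim\sim}$, evaluates to $b^{--} x \not\leqslant a$. Since $^-$ and $^\sim$ are mutually inverse bijections (Lemma~\ref{lem:properties_ipo-mon}(i)), so are $^{--}$ and $^{\sim\sim}$; the substitution $b \mapsto b^{\sim\sim}$ then converts the equivalence into the clean formula $bx \leqslant a \Leftrightarrow x \leqslant b^\sim a$ for all $a, b, x \in P$.

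Three consequences follow immediately. First, $x = 1$, $a = b$ gives $1 \leqslant b^\sim b$. Second, $b \cdot (b^\sim a) \leqslant a$ (by the residuation) at $a = 1$ gives $bb^\sim \leqslant 1$. Third, since $1 \backslash a = a$ in any residuated monoid, the formula at $b = 1$ gives $a = 1^\sim \cdot a = 0 \cdot a$ (using $1^\sim = 0$ from Lemma~\ref{lem:properties_ipo-mon}(iii)); taking $a = 1$ yields $0 = 1$.

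To obtain the remaining pregroup axiom $1 \leqslant xx^-$, I would show that $^-$ is a monoid antihomomorphism. Comparing the derived $b \backslash a = b^\sim a$ with the ipo-monoid formula $b \backslash a = (a^- b)^\sim$ (from Lemma~\ref{lem:properties_ipo-mon}(iv)) gives $(a^- b)^\sim = b^\sim a$; applying $^-$ and using $^{\sim-} = \mathrm{id}$ produces $a^- b = (b^\sim a)^-$, and the substitution $b \mapsto c^-$ with $(c^-)^\sim = c$ yields $(ca)^- = a^- c^-$ for all $a, c \in P$. Now applying $^-$ (which is order-reversing) to the ipo-monoid inequality $xx^\sim \leqslant 0 = 1$ and using the antihomomorphism gives $(xx^\sim)^- = (x^\sim)^- x^- = xx^-$, so $xx^- \geqslant 1^- = 0 = 1$. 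The remaining axioms $x^- x \leqslant 1$ and $xx^\sim \leqslant 1$ are immediate from the ipo-monoid axioms combined with $0 = 1$.

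The main obstacle will be the initial computation: carefully unfolding both sides of the preservation identity, correctly handling the fact that $\alpha = {}^{\sim\sim}$ need not be the identity on $P$, and identifying the change of variables $b \mapsto b^{\sim\sim}$ that produces the clean residuation formula from which everything else follows.
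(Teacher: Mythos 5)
Your proof is correct, but it takes a genuinely different route from the paper's. The paper argues by contraposition: assuming some pregroup inequality fails for some $x$ (with $x^{\ell}=x^{-}$ and $x^{r}=x^{\sim}$), it splits into four cases and in each one exhibits a concrete up-set (such as ${\uparrow}x^{-}x$ or $P\setminus{\downarrow}xx^{-}$) together with a specific pair of points at which $\sigma$ fails to preserve $-$ or ${\sim}$. You instead work forward: evaluating $\sigma({\sim}{\uparrow}x)={\sim}\sigma({\uparrow}x)$ at a generic pair and performing the change of variables $b\mapsto b^{\sim\sim}$ yields the residuation law $bx\leqslant a \Leftrightarrow x\leqslant b^{\sim}a$, and everything else ($1\leqslant b^{\sim}b$, $bb^{\sim}\leqslant 1$, $0=1$, the antihomomorphism $(ca)^{-}=a^{-}c^{-}$, and finally $1\leqslant xx^{-}$ and $x^{-}x\leqslant 1$) follows by routine ipo-monoid algebra; I checked each step and the computations are sound. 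Your approach buys two things the paper's does not: it shows that preservation of ${\sim}$ alone (indeed, only on principal up-sets) already forces the pregroup structure, which strengthens the proposition, and it extracts extra structural information ($0=1$ and the antihomomorphism identity) as by-products. The paper's case analysis, on the other hand, pinpoints exactly which up-sets witness each possible failure, which is more useful when hunting for ipo-monoids where $\sigma$ is not a DInFL-embedding. One small presentational point: you should state explicitly at the outset that "$\mathbf{P}$ is a pregroup" is being read as "$\langle P,\leqslant,\cdot,1,^{-},^{\sim}\rangle$ satisfies the pregroup axioms with $x^{\ell}=x^{-}$ and $x^{r}=x^{\sim}$", which is also the reading implicit in the paper's four cases.
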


\begin{proof}
Suppose $\mathbf P$ is not a pregroup.  
For each possible failure of a pregroup axiom, we show that $\sigma$ fails to preserve either $-$ or ${\sim}$. 

\underline{Case 1:} there is some $x \in P$ such that $x^-x \not\leqslant 1$.
By Lemma~\ref{lem:alt_sigma}(ii) we get that $(1,1) \notin \sigma\left({\uparrow}x^-x\right)$. 
It follows 
that $(1,1) \in \sigma\left({\uparrow}x^-x\right)^{c\smile}$. We also have $(1,1) = (1, 1^{\sim\sim}) \in \alpha$, so $(1, 1) \in \alpha\mathbin{;}\sigma\left({\uparrow}x^-x\right)^{c\smile} = -\sigma\left({\uparrow}x^-x\right)$. 

Now suppose 
$(1,1) \in \sigma\left(-{\uparrow}x^-x\right)$. Hence, by Lemma~\ref{lem:alt_sigma}(i), there exists some 
$v \in -{\uparrow}x^-x$ such that 
$v = 1\cdot v \leqslant 1$. 
This implies that $v 0 = v1^{\sim} \leqslant 0$. Now $x^-x \leqslant 0$, so $vx^-x \leqslant v0$, and therefore $vx^-x \leqslant 0$. It follows that $v^{\sim -}x^-x \leqslant 0$, and consequently $x^-x \leqslant v^{\sim}$. Thus, $v^\sim \in {\uparrow}x^-x$, and hence $v= v^{\sim -} \notin -{\uparrow}x^-x$, which is a contradiction. 
Hence $(1,1) \notin \sigma\left(-{\uparrow}x^-x\right)$ and 
$-\sigma\left({\uparrow}x^-x\right) \neq \sigma\left(-{\uparrow}x^-x\right)$. 


\underline{Case 2:} there exists $x \in P$ with $1 \nleqslant xx^-$. Let $U=P{\setminus}{\downarrow}xx^-$. By definition, $(xx^-)^{\sim} \in {\sim}U$. From $xx^-\leqslant (xx^-)^{\sim -}$ and item (ii) in Proposition~\ref{prop:alternative_def_ipo}, we get $x^-(xx^-)^{\sim}\leqslant x^{\sim}$, and so by Lemma~\ref{lem:alt_sigma}(i) we have $(x^-,x^{\sim})\in\sigma({\sim}U)$. 

Since $1 \in U$ we get $(x^-,x^-)\in \sigma(U)$, and therefore $(x^-,x^-)\notin \sigma(U)^{c\smile}$. If $(x^-,x^{\sim})\in
\sigma(U)^{c\smallsmile}\mathbin{;}\alpha$ then since $\alpha(x^-)=x^{\sim}$, we would get $(x^-,x^-) \in \sigma(U)^{c\smallsmile}$, a contradiction. Hence 
$(x^-,x^{\sim}) \notin 
\sigma(U)^{c\smallsmile}\mathbin{;}\alpha$ and so $\sigma({\sim} U)\neq {\sim}\sigma(U)$. 

\underline{Case 3:} there is some $x \in P$ such that  $xx^{\sim} \nleqslant 1$. Using  Lemma~\ref{lem:alt_sigma}(ii) we can get $(1,1) \in 
\sigma({\uparrow}xx^{\sim})^{c\smallsmile}\mathbin{;}\alpha=-\sigma({\uparrow}xx^{\sim})$. 

Suppose $(1,1)\in \sigma(-{\uparrow}xx^{\sim})$. By Lemma~\ref{lem:alt_sigma}(i) there exists $v \in -{\uparrow}xx^{\sim}$ such that $v=1\cdot v \leqslant 1$. By definition, $v=u^{-}$ where $xx^{\sim} \nleqslant u$. By Lemma~\ref{lem:properties_ipo-mon}(ii)  we get $u^- \nleqslant (xx^{\sim})^-$ and by Lemma~\ref{lem:properties_ipo-mon}(iv) $u^-x\nleqslant x$. This contradicts the fact that $u^-\leqslant 1$, hence $(1,1) \notin \sigma(-{\uparrow}xx^{\sim})$. 

\underline{Case 4:} there is some $x \in P$ with $1 \nleqslant x^{\sim}x$. Let $U=P{\setminus}{\downarrow}x^{\sim}x$. The argument is similar to Case 2. Since $1 \in U$, we will get $(x,x^{\sim\sim})\notin {\sim}\sigma(U)$. Applying (ii) from Proposition~\ref{prop:alternative_def_ipo} to $x^{\sim}x\leqslant (x^{\sim}x)^{\sim -}$ gets us 
$x (x^{\sim}x)^{\sim} \leqslant x^{\sim\sim}$ and since $(x^{\sim}x)^{\sim}\in {\sim}U$ we have $(x,x^{\sim\sim})\in \sigma({\sim}U)$.
\qed
\end{proof}


\section{Representations of DqRAs via ortho pregroups}\label{sec:DqRA-ortho}

In order to extend our construction from DInFL-algebras to distributive quasi relation algebras, we need the ipo-monoid to be equipped with an additional unary operation satisfying certain conditions. 


\begin{definition}\label{def:ortho ipomonoid}
An \emph{ortho ipo-monoid} is a structure $\mathbf{P}=\langle P,\leqslant, \cdot, 1, ^-, ^{\sim}, ^\neg\rangle$ such that $\langle P,\leqslant, \cdot, 1, ^-, ^{\sim}\rangle$ is an ipo-monoid and the following 
hold for all $x, y, z\in P$:  
\begin{multicols}{2}
\begin{enumerate}[\normalfont (i)]
\item $x^{\neg \neg} = x$, and 
\item $xy\leqslant z^-$ iff $y^{\sim\neg}x^{\sim\neg}\leqslant z^\neg$.
\end{enumerate} 
\end{multicols}
\end{definition}
We remark that (ii) above is equivalent (by (ii) of Proposition~\ref{prop:alternative_def_ipo})
to $x y \leqslant z^{\sim}$ iff $x^{\sim\neg}z^{\sim\neg}\leqslant y^\neg$.
We will use the following properties of an ortho ipo-monoid without referring to them. 

\begin{proposition}\label{prop:properties_ortho_ipomonoids}
Let $\mathbf{P}=\langle P,\leqslant, \cdot, 1, ^-, ^{\sim}, ^\neg\rangle$ be an ortho ipo-monoid. Then the following hold for all $x, y \in P$:
\begin{enumerate}[\normalfont (i)]
\item $1^\neg = 1^{-} = 1^{\sim}$, 
\item $x \leqslant y$ iff $y^{\neg} \leqslant x^\neg$, and 
\item $x^{\sim \neg} = x^{\neg -}$. 
\end{enumerate}
\end{proposition}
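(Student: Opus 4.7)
My plan is to study the bijection $\phi: P \to P$ defined by $\phi(x) = x^{\sim\neg}$. Since $\sim$ and $\neg$ are both bijections on $P$ (the first by the ipo-monoid laws, the second by Definition~\ref{def:ortho ipomonoid}(i)), so is $\phi$, with $\phi^{-1}(x) = x^{\neg -}$. Hence (iii) is equivalent to $\phi^2 = \mathrm{id}$, and (i) (since $1^\sim = 1^-$ already) is equivalent to $\phi(1) = 1$. The first step is to rewrite the condition in Definition~\ref{def:ortho ipomonoid}(ii) in the clean forms
\begin{equation*}
\mathrm{(R)}\ \ xy \leqslant w \iff \phi(y)\phi(x) \leqslant \phi(w), \qquad \mathrm{(R')}\ \ xy \leqslant w \iff \phi(x)\, w^\neg \leqslant y^\neg,
\end{equation*}
obtained by substituting $w = z^-$ in Definition~\ref{def:ortho ipomonoid}(ii) and $w = z^\sim$ in the equivalent form stated just after it.

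For (iii), the right-hand side of (R$'$) is itself of the shape $ab \leqslant c$, so (R$'$) applies again to it. Doing so and using $\neg\neg = \mathrm{id}$ gives $xy \leqslant w$ iff $\phi^2(x)\, y \leqslant w$ for every $w$. Hence $xy$ and $\phi^2(x)\, y$ have identical principal upsets and so are equal in the poset; setting $y = 1$ yields $x = \phi^2(x)$.

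For (i), set $y = 1$ in (R) and substitute $x = \phi(a)$, $w = \phi(b)$, using (iii), to obtain $\phi(a) \leqslant \phi(b) \iff \phi(1)\,a \leqslant b$. Taking $b = a$ shows $\phi(1)\,a \leqslant a$ for every $a$; in particular $\phi(1) \leqslant 1$. The same biconditional then forces $\phi$ to be order-preserving (since $a \leqslant b$ yields $\phi(1)\,a \leqslant a \leqslant b$), so applying $\phi$ to $\phi(1) \leqslant 1$ and using (iii) gives $1 \leqslant \phi(1)$, and hence $\phi(1) = 1$. Property (ii) follows at once: $\phi$ is a bijective order-preserving map with $\phi^{-1} = \phi$, hence an order-automorphism of $\langle P, \leqslant\rangle$, and since $y^\neg = (y^{-\sim})^\neg = \phi(y^-)$, we have $y^\neg \leqslant x^\neg$ iff $y^- \leqslant x^-$, which by Lemma~\ref{lem:properties_ipo-mon}(ii) is equivalent to $x \leqslant y$.

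The key insight, and the step I expect to be the main obstacle, is the derivation of (iii): one has to spot that (R$'$) consumes its own right-hand side, so that a single self-application collapses $\neg\neg$ (known to be the identity) and $\phi^2$ (unknown) together, forcing the latter to be the identity as well. Everything after (iii) reduces to bookkeeping with (R) and the ipo-monoid laws.
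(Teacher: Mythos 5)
Your proof is correct, but it takes a genuinely different route from the paper's. The paper proves the three items in the order (i), (ii), (iii) by direct element-wise computation: each step is an explicit application of Definition~\ref{def:ortho ipomonoid}(ii) or Proposition~\ref{prop:alternative_def_ipo}(ii) to a concrete string of negations (e.g.\ it establishes $x^{\sim\neg\sim\neg}\leqslant x$ and $x^{\neg-\neg-}\leqslant x$ separately and squeezes (iii) out of these two inequalities, using (ii) along the way). You instead package the axiom into the two biconditionals (R) and (R$'$) for the single map $\phi(x)=x^{\sim\neg}$, prove (iii) \emph{first} by the self-application of (R$'$) (which is essentially the same pair of inequalities the paper derives, but obtained in one stroke because $\neg\neg=\mathrm{id}$ cancels on the outer layer), and then read off (i) and (ii) as formal consequences of $\phi$ being a self-inverse order automorphism fixing $1$. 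I checked the substitutions: $w=z^-$ and $w=z^\sim$ are legitimate for arbitrary $w$ since $^-$ and $^\sim$ are mutually inverse bijections, the double application of (R$'$) does yield $xy\leqslant w\iff\phi^2(x)y\leqslant w$, and the order-preservation argument for $\phi$ closes the loop on $\phi(1)=1$. What your approach buys is brevity and a structural explanation (the axiom is exactly the statement that $\phi$ "rotates" inequalities, and involutivity of $\neg$ forces $\phi^2=\mathrm{id}$); what the paper's buys is that each item is proved self-contained and in the order in which later sections cite them, at the cost of long negation bookkeeping.
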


\begin{proof}
For item (i), it suffices to prove that $1^\neg = 1^{\sim}$ since $1^- = 1^{\sim}$ holds in any ipo-monoid.  We have $1^{-\sim \neg}\leqslant 1^{\neg}$, and so $1^{-\sim\neg} \cdot 1^{\neg -\sim \neg} = 1^{-\sim\neg} \cdot 1 \leqslant 1^{\neg}$. 
Applying Definition~\ref{def:ortho ipomonoid}(ii) to this gives $1^{\neg -} \cdot 1^- \leqslant 1^-$. 
Hence, by  Proposition~\ref{prop:alternative_def_ipo}(ii), $1^- = 1^-\cdot 1 \leqslant 1^{\neg -\sim} = 1^{\neg}$.

We also have $1^{\neg -}\cdot 1 \leqslant 1^{\neg -}$, and so, by Definition~\ref{def:ortho ipomonoid}(ii), $1^{\sim \neg}\cdot 1^{\neg -\sim \neg} \leqslant 1^{\neg \neg}$, i.e., $1^{\sim \neg} = 1^{\sim\neg}\cdot 1 \leqslant 1$. Hence, $1^{\sim\neg} \cdot 1^{\sim\neg} \leqslant 1^{\sim \neg}\cdot 1$, and combining this with $1^{\sim\neg}\cdot 1 \leqslant 1$ gives $1^{\sim\neg} \cdot 1^{\sim\neg} \leqslant 1 = 1^{\neg\neg}$. Another application of  Definition~\ref{def:ortho ipomonoid}(ii) yields $1 \cdot 1 \leqslant 1^{\neg -}$. Therefore $1 \leqslant 1^{\neg -}$, and so $1^\neg = 1^{\neg - \sim} \leqslant 1^{\sim} = 1^{-}$. We can thus conclude that $1^{\neg}= 1^- = 1^{\sim}$.   

For (ii), assume $x \leqslant y$. Then $y^- \leqslant x^-$ by Lemma~\ref{lem:properties_ipo-mon}(ii), and so $1\cdot y^- \leqslant x^-$. Hence, applying Definition~\ref{def:ortho ipomonoid}(ii) to this yields $y^{-\sim\neg}\cdot 1^{\sim\neg} \leqslant x^{\neg}$ or, equivalently, $y^{\neg}\cdot 1^{\sim\neg}\leqslant x^{\neg}$. Since $1^{\neg} = 1^{\sim}$ by (i) and $1^{\neg\neg} =1$, we get $1^{\sim \neg} =1$, and thus $y^{\neg} \leqslant x^{\neg}$. 

For item (iii), we first show that $x^{\sim \neg} \leqslant x^{\neg -}$. We have $x^{-\sim}\cdot 1\leqslant x^{-\sim}$, and so $x^-x^{-\sim}\leqslant 1^-$ by Proposition~\ref{prop:alternative_def_ipo}(ii). Applying (ii) of Definition~\ref{def:ortho ipomonoid} gives $x^{-\sim\sim\neg}x^{-\sim \neg} \leqslant 1^\neg$. This is equivalent to $x^{\sim\neg}x^{\neg} \leqslant 1^{\neg-\sim}$. Applying Proposition~\ref{prop:alternative_def_ipo}(ii) again gives $1^{\neg -}\cdot x^{\sim\neg} \leqslant x^{\neg -}$, and therefore, by (ii) of Definition~\ref{def:ortho ipomonoid}, $x^{\sim\neg\sim\neg}\cdot 1^{\neg -\sim\neg} \leqslant x^{\neg\neg}$, i.e., $x^{\sim\neg \sim \neg}\leqslant x$. Consequently, by (ii), $x^{\neg} \leqslant x^{\sim \neg\sim\neg\neg} = x^{\sim\neg \sim}$, and thus $x^{\sim\neg} = x^{\sim\neg \sim -} \leqslant x^{\neg -}$. 

We also have $x\cdot 1\leqslant x^{\neg\neg}$, which implies $x^{\neg-\sim \neg} \cdot 1^{\neg-\sim\neg} \leqslant x^{\neg\neg}$. Hence, by Definition~\ref{def:ortho ipomonoid}(ii), $1^{\neg -}\cdot x^{\neg -} \leqslant x^{\neg -}$, and so, $x^{\neg -}x^{\neg} \leqslant 1^{\neg - \sim} = 1^{\neg}$. This is equivalent to $x^{\neg -\neg -\sim\neg} x^{\neg\neg -\sim \neg} \leqslant 1^{\neg}$, so another application of Definition~\ref{def:ortho ipomonoid}(ii) yields $x^{\neg\neg -} x^{\neg -\neg -} \leqslant 1^-$. Therefore, $x^{\neg -\neg -} = x^{\neg -\neg -} \cdot 1\leqslant x^{\neg\neg -\sim} = x$, and consequently $x^{\sim} \leqslant x^{\neg-\neg -\sim} = x^{\neg -\neg}$. Thus, by (ii), $x^{\neg -} = x^{\neg-\neg\neg} \leqslant x^{\sim \neg}$. 
\qed
\end{proof}

If $\langle P, \leqslant, \cdot, 1, ^\ell, ^r\rangle$ is a pregroup and $^\neg: P \to P$ satisfies items (i) and (ii) of Definition~\ref{def:ortho ipomonoid}, then we will refer to $\mathbf P = \langle P, \leqslant, \cdot,1, ^{\ell}, ^r, ^{\neg}\rangle$ as an \emph{ortho pregroup}. 

\begin{example}
If $\mathbf{G}=\langle G,\cdot,e, ^{-1}\rangle$ is a group and $\mathrm{id}_G:G\to G$ the identity function, then $\langle G, =, \cdot ,e,^{-1},^{-1},\mathrm{id}_G\rangle$ is an ortho pregroup. 
If $\mathbf{G}$ is an Abelian group, then
$\langle G, =, \cdot ,e,^{-1},^{-1},^{-1}\rangle$ is another example of an ortho pregroup. 
\end{example}

We now construct a DqRA from an ortho ipo-monoid 
$\mathbf{P}=\langle P, \leqslant, \cdot, 1, ^-, ^{\sim}, ^\neg\rangle$.  For each $U \in \mathsf{Up}(P,\leqslant)$, define $\neg U=\{ x^\neg \mid x \notin U\}$. 

\begin{lemma}\label{lem:neg_upset}
Let $\mathbf{P} = \langle P, \leqslant, \cdot, 1, ^-, ^{\sim}, ^\neg\rangle$ be an ortho ipo-monoid. If  $U \in \mathsf{Up}\left(P, \leqslant\right)$, then ${\neg}U \in \mathsf{Up}\left(P, \leqslant\right)$.
\end{lemma}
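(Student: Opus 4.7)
The plan is to mimic the usual argument that shows the image of an up-set under an order-reversing involution (applied to the complement) is again an up-set. The only nontrivial ingredients are the involutivity of $\neg$ and the fact that $\neg$ is order-reversing, both of which are already available: (i) of Definition~\ref{def:ortho ipomonoid} and (ii) of Proposition~\ref{prop:properties_ortho_ipomonoids}, respectively.

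Concretely, I would start by fixing $U \in \mathsf{Up}(P,\leqslant)$, an element $a \in \neg U$, and an element $b \in P$ with $a \leqslant b$, and argue $b \in \neg U$. By definition of $\neg U$, there is some $x \notin U$ with $a = x^\neg$. The candidate witness for $b \in \neg U$ is $b^\neg$: using involutivity $b^{\neg\neg} = b$, so it suffices to show $b^\neg \notin U$.

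To show $b^\neg \notin U$, apply the order-reversal of $\neg$ from Proposition~\ref{prop:properties_ortho_ipomonoids}(ii) to $a \leqslant b$, obtaining $b^\neg \leqslant a^\neg = x^{\neg\neg} = x$. Since $U$ is an up-set and $x \notin U$, the element $b^\neg$ cannot lie in $U$ (if it did, up-closure would force $x \in U$). Hence $b^\neg \notin U$, and therefore $b = (b^\neg)^\neg \in \neg U$, completing the argument.

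There is essentially no obstacle here; the only thing to be careful about is to avoid confusing the definition $\neg U = \{ x^\neg \mid x \notin U\}$ with $(\neg U)^c$ or similar, and to invoke involutivity of $\neg$ (Definition~\ref{def:ortho ipomonoid}(i)) to rewrite $b$ as $(b^\neg)^\neg$ so that $b$ is recognisably of the form $y^\neg$ with $y \notin U$. The whole argument is two lines once the order-reversing property of $\neg$ is in hand.
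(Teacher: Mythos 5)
Your argument is correct and is essentially identical to the paper's own proof: both rewrite the target element as $(\,\cdot\,)^{\neg\neg}$ using involutivity, apply the order-reversal of $\neg$ (Proposition~\ref{prop:properties_ortho_ipomonoids}(ii)) to the inequality, and use up-closure of $U$ contrapositively to conclude the witness lies outside $U$. No gaps.
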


\begin{proof}
Let $U \in \mathsf{Up}\left(P, \leqslant\right)$, and let $x\in {\neg}U$ and $y \in P$. Assume $x \leqslant y$. Since $x \in {\neg}U$, it follows that $x^{\neg\neg} \in {\neg}U$, and hence $x^{\neg} \notin U$. Since $x\leqslant y$, we have $y^\neg \leqslant x^\neg$ by item (ii) of Definition~\ref{def:ortho ipomonoid}. Hence, since $U$ is an upset of $\langle P, \leqslant\rangle$, it follows that $y^\neg \notin U$. This shows that $y = y^{\neg\neg} \in {\neg}U$.  
\qed
\end{proof}

Since $\langle \mathsf{Up}(P,\leqslant), \cap, \cup, \bullet, {\uparrow}1, -, {\sim}\rangle$ is a DInFL-algebra by Theorem~\ref{thm:DInFL_from_ipo-monoid}, it
will follow that $\langle \mathsf{Up}(P,\leqslant), \cap, \cup, \bullet, {\uparrow}1, -, {\sim}, \neg\rangle$ is a DqRA if we can show that for all $U \in \mathsf{Up}(P,\leqslant)$, we have $\neg\neg U = U$, and that (\textsf{Dm}) and (\textsf{Dp}) hold. 

\begin{theorem}\label{thm:DqRA_from_ortho-ipo-monoid}
Let $\mathbf{P}=\langle P, \leqslant, \cdot, 1, ^-, ^{\sim}, ^\neg\rangle$ be an ortho ipo-monoid.  Then the algebra
$\mathcal{Q}(\mathbf{P}) = \langle \mathsf{Up}(P,\leqslant), \cap, \cup, \bullet, {\uparrow}1, -, {\sim}, \neg \rangle$ is a DqRA.
\end{theorem}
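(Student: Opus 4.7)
The plan is to verify, for every $U, V \in \mathsf{Up}(P,\leqslant)$, the three conditions that extend the DInFL-structure already supplied by Theorem~\ref{thm:DInFL_from_ipo-monoid} to a DqRA on $\mathcal{Q}(\mathbf{P})$: namely, $\neg\neg U = U$, the De Morgan law \textsf{(Dm)}: $\neg(U\cup V) = \neg U \cap \neg V$, and the De Morgan product law \textsf{(Dp)}: $\neg(U\bullet V) = {\sim}(-\neg V\bullet -\neg U)$. That ${\sim}U$, $-U$ and $\neg U$ all lie in $\mathsf{Up}(P,\leqslant)$ is already secured by Lemmas~\ref{lem:-U&~U_upsets} and~\ref{lem:neg_upset}.

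My first step would be to rewrite the three unary operations in a uniform pointwise way. Since $^-$, $^\sim$ and $^\neg$ are involutions on $P$, one checks directly that
\[
\neg W = \{\,z\in P : z^\neg\notin W\,\}, \quad {-}W = \{\,z : z^\sim\notin W\,\}, \quad {\sim}W = \{\,z : z^-\notin W\,\}.
\]
With these identities in hand, involution of $\neg$ is immediate: $z\in\neg\neg U$ iff $z^\neg\notin \neg U$ iff $z^{\neg\neg}\in U$ iff $z\in U$. The De Morgan law is just as quick: $z\in\neg(U\cup V)$ iff $z^\neg\notin U$ and $z^\neg\notin V$ iff $z\in\neg U\cap\neg V$.

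The main obstacle is \textsf{(Dp)}, and this is where Definition~\ref{def:ortho ipomonoid}(ii) will do real work. The key preparatory observation is that Proposition~\ref{prop:properties_ortho_ipomonoids}(iii) says $^{\sim\neg}$ and $^{\neg -}$ coincide as maps $P\to P$, and since $(\neg -)\circ(\sim\neg)=\mathrm{id}_P$, the map $^{\sim\neg}$ is itself an involution of $P$. Combined with the pointwise description of $-\neg U$, a short calculation gives that $u\in U$ iff $u^{\sim\neg}\in -\neg U$; thus $^{\sim\neg}$ restricts to a bijection between $U$ and $-\neg U$, and likewise between $V$ and $-\neg V$.

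With this bijection in place, \textsf{(Dp)} unfolds cleanly. On the one hand, $z\in\neg(U\bullet V)$ iff for all $u\in U$ and $v\in V$, $uv\nleqslant z^\neg$. Writing $z^\neg = (z^{\neg\sim})^-$ and applying Definition~\ref{def:ortho ipomonoid}(ii), this is equivalent to $v^{\sim\neg}u^{\sim\neg}\nleqslant z^{\neg\sim\neg}$; and Proposition~\ref{prop:properties_ortho_ipomonoids}(iii) together with the involutions collapses $z^{\neg\sim\neg}$ to $z^-$. Substituting $a=v^{\sim\neg}\in -\neg V$ and $b=u^{\sim\neg}\in -\neg U$, the condition becomes: for all $a\in -\neg V$ and $b\in -\neg U$, $ab\nleqslant z^-$, i.e.\ $z^-\notin -\neg V\bullet -\neg U$, which is exactly $z\in{\sim}(-\neg V\bullet -\neg U)$. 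Combining these three verifications with Theorem~\ref{thm:DInFL_from_ipo-monoid} yields the claim.
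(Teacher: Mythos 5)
Your proposal is correct and follows essentially the same route as the paper: reduce to the three conditions $\neg\neg U=U$, \textsf{(Dm)}, and \textsf{(Dp)} on top of Theorem~\ref{thm:DInFL_from_ipo-monoid}, and verify \textsf{(Dp)} by combining Definition~\ref{def:ortho ipomonoid}(ii) with the identity $x^{\sim\neg}=x^{\neg-}$ and the observation that $u\in U$ iff $u^{\sim\neg}\in -\neg U$. The only cosmetic differences are that you run \textsf{(Dp)} as a single chain of equivalences where the paper argues the two inclusions separately by contradiction, and you prove \textsf{(Dm)} in the form $\neg(U\cup V)=\neg U\cap\neg V$ where the paper proves the equivalent dual.
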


\begin{proof}

First, $x \in \neg\neg U$ iff $x^\neg \notin \neg U$ iff $x^{\neg\neg} \in U$ iff $x \in U$. 

Next, we show that (\textsf{Dm}) holds. Let $U, V \in \mathsf{Up}(P,\leqslant)$. Then $x \in \neg(U \cap V)$ iff $x^\neg \notin U \cap V$ iff $x^\neg \notin U$ or $x^\neg \notin V$ iff $x \in \neg U$ or $x\in \neg V$ iff $x \in \neg U \cup \neg V$. 

Finally, we show that (\textsf{Dp}) holds. 
To see that $\neg(U\bullet V) \subseteq {\sim}(-\neg V \bullet -\neg U)$, suppose $x \notin {\sim}(-\neg V \bullet -\neg U)$. Then we have $x^{-\sim} \notin {\sim}(-\neg V \bullet -\neg U)$, and so $x^- \in (-\neg V \bullet -\neg U)$. Hence, there exist $v \in -\neg V$ and $u \in -\neg U$ such that $vu\leqslant x^-$. Consequently, $u^{\sim\neg}v^{\sim\neg}\leqslant x^\neg$ by item (iii) of Definition~\ref{def:ortho ipomonoid}. From $v \in -\neg V$ and $u \in -\neg U$, we get $v^{\sim\neg} \in V$ and $u^{\sim\neg} \in U$. It thus follows that $x^\neg \in U\bullet V$, which means $x = x^{\neg\neg} \notin \neg(U\bullet V)$.

Now suppose $x \notin \neg(U \bullet V)$. Then $x^\neg  \in U\bullet V$, and so, there exist $u \in U$ and $v \in V$ such that  $uv \leqslant x^\neg$. Hence, by  Definition~\ref{def:ortho ipomonoid}(iii), $v^{\sim\neg}u^{\sim\neg} \leqslant x^\neg$.
\qed
\end{proof}

We say that a DqRA $\mathbf{A}$ is an \emph{ortho pregroup DqRA} if $\mathbf{A}$ is isomorphic to a subalgebra of $\mathcal{Q}(\mathbf{P})$ for some ortho pregroup $\mathbf{P}$.  

To build a DqRA of binary relations from an ortho ipo-monoid, define a map $\beta: P \to P$ by setting, for all $x \in P$, $\beta(x) = x^\neg$. Then, by Proposition~\ref{prop:properties_ortho_ipomonoids}, $\beta$ and $\alpha(x)=x^{\sim\sim}$ satisfy the conditions of Theorem~\ref{Theorem:Dq(E)}. 
\begin{theorem}\label{thm:binrel-DqRA-from-ortho-pregroup}
Let $\mathbf P = \langle P, \leqslant, \cdot, 1, ^-, ^{\sim}, ^\neg\rangle$ be an ortho ipo-monoid. Then the algebra $\left\langle \mathsf{Up}\left(P^2, \preccurlyeq\right), \cap, \cup, \mathbin{;}, \leqslant, -, \sim, \neg \right\rangle$ is a DqRA. 

\end{theorem}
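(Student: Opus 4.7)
The plan is to apply Theorem~\ref{Theorem:Dq(E)} directly. To do so, it suffices to verify that the maps $\alpha(x) = x^{\sim\sim}$ and $\beta(x) = x^{\neg}$, viewed as binary relations on $P$, meet the hypotheses of that theorem: $\alpha$ must be an order automorphism of $\langle P, \leqslant\rangle$, $\beta$ must be a self-inverse dual order automorphism, and the conjugation condition $\beta = \alpha \mathbin{;} \beta \mathbin{;} \alpha$ must hold. All three checks are essentially bookkeeping using results already established in Section~\ref{sec:DqRA-ortho}.

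First, the fact that $\alpha$ is an order automorphism was already noted in the paragraph preceding Theorem~\ref{thm:DInFL_from_PtimesP}; it follows from Lemma~\ref{lem:properties_ipo-mon}, with inverse $x \mapsto x^{--}$. Second, $\beta$ is self-inverse by Definition~\ref{def:ortho ipomonoid}(i), namely $x^{\neg\neg} = x$, and $\beta$ reverses the order by Proposition~\ref{prop:properties_ortho_ipomonoids}(ii). So the only nontrivial piece is the conjugation identity.

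For the conjugation condition, I need to show $x^{\neg} = x^{\sim\sim\neg\sim\sim}$ for every $x \in P$. The key tool is the commutation identity $x^{\sim\neg} = x^{\neg -}$ from Proposition~\ref{prop:properties_ortho_ipomonoids}(iii). Applying $^{\sim}$ on the right and using $y^{-\sim} = y$ from Lemma~\ref{lem:properties_ipo-mon}(i), this yields
\begin{equation*}
x^{\sim\neg\sim} \;=\; x^{\neg -\sim} \;=\; x^{\neg}.
\end{equation*}
Substituting $x^{\sim\sim}$ for $x$ in this identity gives $x^{\sim\sim\neg\sim\sim\sim} = x^{\sim\sim\neg}$, which is not quite what I want; instead, I apply $^{\sim}$ to both sides of $x^{\sim\neg\sim} = x^{\neg}$ in order to obtain $x^{\sim\neg\sim\sim} = x^{\neg\sim}$, and then substitute $x^\sim$ for $x$ to deduce $x^{\sim\sim\neg\sim\sim} = x^{\sim\neg\sim} = x^{\neg}$. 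This is exactly $\beta = \alpha \mathbin{;} \beta \mathbin{;} \alpha$.

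With all three hypotheses in hand, the second part of Theorem~\ref{Theorem:Dq(E)} immediately yields that $\langle \mathsf{Up}(P^2, \preccurlyeq), \cap, \cup, \mathbin{;}, \leqslant, -, \sim, \neg\rangle$ is a DqRA. The main (and only) piece of genuine work is the juggling of the three involutive operations to verify the conjugation identity; everything else is a direct citation of the preparatory results proved earlier in this section.
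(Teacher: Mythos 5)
Your proposal is correct and follows exactly the route the paper takes: the paper's (unwritten) proof is precisely the remark preceding the theorem, namely that $\alpha(x)=x^{\sim\sim}$ and $\beta(x)=x^{\neg}$ satisfy the hypotheses of Theorem~\ref{Theorem:Dq(E)} by Proposition~\ref{prop:properties_ortho_ipomonoids}. Your explicit verification of the conjugation identity $x^{\sim\sim\neg\sim\sim}=x^{\neg}$ via $x^{\sim\neg\sim}=x^{\neg}$ is a correct filling-in of the detail the paper leaves implicit.
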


Recall the map $\sigma: \mathsf{Up}(P,\leqslant) \to \mathsf{Up}(P^2,\preccurlyeq)
$ from Section~\ref{sec:embedDRL}. 
Below we show that $\sigma$ preserves the $\neg$ operation of $\mathcal{Q}(\mathbf{P})$ when $\mathbf{P}$ is an ortho pregroup.

\begin{lemma}
Let $\mathbf{P}=\langle P,\leqslant,\cdot, 1, ^{\ell},^{r},^{\neg}\rangle$ be an ortho pregroup. For  $U\in \mathsf{Up}\left(P, \leqslant\right)$, we have $\sigma\left(\neg U\right) = \neg \sigma\left(U\right)$.  
\end{lemma}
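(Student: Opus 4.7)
The plan is to unpack both sides to a quantified statement about pairs and then reduce the claim to a single equivalence of inequalities obtained from the ortho axiom. Since $\alpha(x) = x^{rr}$ and $\beta(x) = x^\neg$ are graphs of bijective functions on $P$, Lemma~\ref{Lemma:ComplementComposition} lets us push the complement past them, yielding
\[
(x,y) \in \neg\sigma(U) \iff (x^{rr\neg}, y^\neg) \notin \sigma(U).
\]
By Lemma~\ref{lem:alt_sigma}(i), this is in turn equivalent to $x^{rr\neg} u \not\leqslant y^\neg$ for every $u \in U$.

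The crucial step will be to transform $x^{rr\neg} u \leqslant y^\neg$ into a more useful form. I will apply the dual form of the ortho axiom recorded after Definition~\ref{def:ortho ipomonoid}, namely $ab \leqslant c^\sim$ iff $a^{\sim\neg} c^{\sim\neg} \leqslant b^\neg$, together with the pregroup cancellations $x^{r\ell} = x = x^{\ell r}$ and the fact (an easy consequence of Proposition~\ref{prop:properties_ortho_ipomonoids}(iii)) that $^\neg$ interchanges $^r$ and $^\ell$, i.e.\ $(x^r)^\neg = (x^\neg)^\ell$ and $(x^\ell)^\neg = (x^\neg)^r$. Setting $a = x^r$, $b = y$ and $c = u^{\neg \ell}$ in the axiom, a short calculation gives
\[
x^r y \leqslant u^\neg \iff x^{rr\neg} u \leqslant y^\neg.
\]

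Using this, the condition $(x,y) \in \neg \sigma(U)$ becomes: for every $u \in U$, $x^r y \not\leqslant u^\neg$. Since $\neg U$ is an up-set by Lemma~\ref{lem:neg_upset} and $^\neg$ is an order-reversing bijection, a standard contrapositive argument shows that this holds iff $x^r y \in \neg U$ (if $u \in U$ and $x^r y \leqslant u^\neg$, then $u \leqslant (x^r y)^\neg$ forces $(x^ry)^\neg \in U$, i.e.\ $x^r y \notin \neg U$). It then remains to prove $x^r y \in \neg U$ iff $(x,y) \in \sigma(\neg U)$: the forward direction takes $w = x^r y$ and uses $xx^r \leqslant 1$ to obtain $xw \leqslant y$; the converse takes any $w \in \neg U$ with $xw \leqslant y$, uses $1 \leqslant x^r x$ to deduce $w \leqslant x^r x w \leqslant x^r y$, and invokes the up-set property of $\neg U$ once more.

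The main obstacle is the bookkeeping in the key equivalence above, where one must carefully track repeated applications of $^\neg$, $^\sim = {^r}$ and $^- = {^\ell}$; once that equivalence is cleanly established, everything else reduces to standard manipulations using only the pregroup identities and the fact that $\neg U$ is an up-set.
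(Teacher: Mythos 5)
Your proof is correct and follows essentially the same route as the paper's: both push the complement through $\alpha$ and $\beta$ to reduce $(x,y)\in\neg\sigma(U)$ to the condition that $x^{rr\neg}u\not\leqslant y^{\neg}$ for all $u\in U$, apply the (dual form of the) ortho axiom to turn this into a statement about the element $x^{r}y$, and finish using $xx^{r}\leqslant 1\leqslant x^{r}x$ together with the up-set property of $\neg U$. Your packaging as a single chain of equivalences pivoting on $x^{r}y$ is a slightly tidier presentation of the same computation, which the paper carries out as two separate containments argued by contradiction.
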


\begin{proof}
Let $\left(x, y\right) \in \sigma\left(\neg U\right)$. Then there is some $u \in \neg U$ such that $xu \leqslant y$. 
Now suppose $\left(x, y\right) \notin \neg \sigma\left(U\right)$. Then we have $\left(x, y\right) \notin \alpha\, \mathbin{;}\beta\mathbin{;} \sigma\left(U\right)^{c}\mathbin{;} \beta = \left(\alpha\mathbin{;} \beta\mathbin{;} \sigma\left(U\right)\mathbin{;}\beta\right)^c$, and so $\left(x, y\right) \in \alpha\mathbin{;}\beta\mathbin{;} \sigma\left(U\right)\mathbin{;}\beta$. Hence, $\left(x^{rr\neg}, y^\neg\right) \in \sigma\left(U\right)$, which means there is some $v \in U$ such that $x^{rr}v \leqslant y^{\neg}$. This is equivalent to $x^{rr\neg} v^{\neg \ell r \neg} \leqslant y^{\neg}$. Applying Definition~\ref{def:ortho ipomonoid}(ii) to this gives $v^{\neg\ell} x^r \leqslant y^{\ell}$. Applying Proposition~\ref{prop:Properties_of_pregroups} to $xu\leqslant y$ gives $y^{\ell} \leqslant (xu)^{\ell} = u^{\ell}x^{\ell}$. We thus obtain $v^{\neg \ell}x^r \leqslant u^{\ell}x^{\ell}$, and thus $v^{\neg \ell}x^rx \leqslant u^{\ell}x^{\ell}x$. Now $1 \leqslant x^r x$ and $x^{\ell}x \leqslant 1$, so 
$v^{\neg \ell} \leqslant u^{\ell}$ or, equivalently, $v
\leqslant u^{\neg}$. 
Therefore, since $v \in U$ and $U$ is an upset of $\mathsf{Up}\left(P, \leqslant\right)$, it must be the case that $u^\neg \in U$. This gives $u=u^{\neg\neg} \notin \neg U$, which is a contradiction.


For the other containment, let $(x, y)\in \neg\sigma\left(U\right) = \alpha\mathbin{;} \beta\mathbin{;} \sigma\left(U\right)^c\mathbin{;} \beta$. Then $\left(x^{rr\neg}, y^\neg\right) \in \sigma\left(U\right)^c$, and so $\left(x^{rr\neg}, y^\neg\right) \notin \sigma\left(U\right)$. Consequently, for all $u \in P$, if $u \in U$, then $x^{rr\neg} \not\leqslant y^{\neg}$.
We have to find $v \in \neg U$ such that $xv \leqslant y$. 
Now $(x^ry)^{\ell}x^ry \leqslant 1$, so $(x^ry)^{\ell}x^ryy^{\ell} \leqslant y^{\ell}$. 
We also have $1\leqslant yy^{\ell}$, which means $(x^ry)^{\ell}x^r\leqslant (x^ry)^{\ell}x^ryy^{\ell}$. It follows that $(x^ry)^{\ell}x^r \leqslant y^{\ell}$. Applying (ii) of Definition~\ref{def:ortho ipomonoid} to this gives $x^{rr\neg}(x^ry)^{\ell r\neg}\leqslant y^{\neg}$ or, equivalently, $x^{rr\neg}(x^ry)^{\neg}\leqslant y^{\neg}$. 
This implies $\left(x^r y\right)^\neg \notin U$, and therefore $x^ry \in \neg U$. If we can thus show that $x x^r y \leqslant y$, we would be done. Indeed, $x  x^r \leqslant 1$, which means $x  x^r y \leqslant y$. 
%
\qed
\end{proof}

Combining the above lemma with the results of Section~\ref{sec:Rep_DInFL}, we obtain the following representability result for ortho pregroup DqRAs. 

\begin{theorem}\label{thm: pre-group_representable_DqRAs}
Let $\mathbf{P} = \langle P, \leqslant, \cdot, 1, ^\ell, ^r, ^\neg\rangle$ be an ortho pregroup. Then $\mathcal{Q}(\mathbf{P})$ is representable. 
Moreover, every ortho pregroup DqRA is representable. 
\end{theorem}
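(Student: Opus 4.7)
The plan is to assemble the results already established to show that $\sigma$ is a DqRA embedding of $\mathcal{Q}(\mathbf{P})$ into the full DqRA of Theorem~\ref{thm:binrel-DqRA-from-ortho-pregroup}. Since a pregroup is an ipo-monoid (Proposition~\ref{prop:pregroups_are_ipo-monoids}), Theorem~\ref{thm:sigma-pres-meets}(iv) already tells us that $\sigma$ is an injective distributive residuated lattice homomorphism, so it preserves $\cap$, $\cup$, $\bullet$, and ${\uparrow}1$. Lemma~\ref{lem:sigma_preserves_linear_negations} shows $\sigma$ preserves the linear negations $-$ and ${\sim}$, and the lemma immediately preceding the theorem shows it preserves $\neg$. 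Putting these together, $\sigma$ is a DqRA embedding.

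Next I would explicitly verify that the codomain is a full DqRA in the sense of Definition~\ref{Definition:RDqRA}. Following the recipe in the paragraph before Theorem~\ref{thm:binrel-DqRA-from-ortho-pregroup}, set $\alpha(x) = x^{\sim\sim}$ and $\beta(x) = x^{\neg}$. Then $\alpha$ is an order automorphism by Lemma~\ref{lem:properties_ipo-mon}, $\beta$ is a self-inverse dual order automorphism by Definition~\ref{def:ortho ipomonoid}(i) and Proposition~\ref{prop:properties_ortho_ipomonoids}(ii), and the compatibility condition $\beta = \alpha\mathbin{;}\beta\mathbin{;}\alpha$ reduces to $x^{\sim\sim\neg\sim\sim} = x^\neg$ for all $x$. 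This in turn follows from the identity $x^{\sim\neg} = x^{\neg-}$ of Proposition~\ref{prop:properties_ortho_ipomonoids}(iii) together with the double-negation law $x^{-\sim} = x$ from Lemma~\ref{lem:properties_ipo-mon}(i).

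Hence $\mathcal{Q}(\mathbf{P}) \in \mathbb{IS}(\mathsf{FDqRA}) \subseteq \mathbb{ISP}(\mathsf{FDqRA})$, which is representability. For the \emph{moreover} clause, any ortho pregroup DqRA $\mathbf{A}$ is by definition isomorphic to a subalgebra of some $\mathcal{Q}(\mathbf{P})$, and composing this embedding with $\sigma$ gives an embedding of $\mathbf{A}$ into a full DqRA; hence $\mathbf{A} \in \mathbb{ISP}(\mathsf{FDqRA})$. The substantive content of the theorem lies in the two preservation lemmas, so the remaining work is entirely bookkeeping; the only minor calculation is the compatibility identity for $\beta$, and this presents no genuine obstacle.
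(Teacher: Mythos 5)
Your proposal is correct and follows essentially the same route as the paper, which obtains the theorem by combining the $\neg$-preservation lemma with Theorem~\ref{thm:sigma-pres-meets}(iv) and Lemma~\ref{lem:sigma_preserves_linear_negations} to conclude that $\sigma$ is a DqRA embedding into the full DqRA of Theorem~\ref{thm:binrel-DqRA-from-ortho-pregroup}. Your explicit check of the compatibility condition $\beta = \alpha\mathbin{;}\beta\mathbin{;}\alpha$ (i.e.\ $x^{\sim\sim\neg\sim\sim}=x^{\neg}$) is a welcome addition that the paper delegates to the remark preceding Theorem~\ref{thm:binrel-DqRA-from-ortho-pregroup}, and your derivation of it from Proposition~\ref{prop:properties_ortho_ipomonoids}(iii) and Lemma~\ref{lem:properties_ipo-mon}(i) is valid.
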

\section{Examples from products of groups}\label{sec:apps}


In this section we show that the class of algebras in Example~\ref{ex:represenatble_via_Z7} is ortho pregroup representable and therefore representable by Theorem~\ref{thm: pre-group_representable_DqRAs}. The case for $\mathbf{A}_3$ can also be found in~\cite{JS23}.

\begin{theorem}\label{thm:A_n_pregroup_representable}
Let $n \geqslant 3$. Then the algebra $\mathbf{A}_n= \langle A_n, \wedge, \vee, \cdot, 1, -, \sim, \neg\rangle$ is an ortho pregroup DqRA. 
\end{theorem}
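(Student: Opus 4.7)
The plan is to invoke Theorem~\ref{thm: pre-group_representable_DqRAs}: it suffices to exhibit, for each $n \geqslant 3$, an ortho pregroup $\mathbf{P}_n$ into whose up-set algebra $\mathcal{Q}(\mathbf{P}_n)$ the algebra $\mathbf{A}_n$ embeds. Since $\mathbf{A}_n$ is finite and cyclic, and every finite pregroup is a group (Example~\ref{ex:group_is_pregroup}), I would take $\mathbf{P}_n$ to be a finite abelian group $\mathbf{G}_n$ with $x^{\ell} = x^r = x^{\neg} = x^{-1}$. With this choice the three involutions on $\mathcal{Q}(\mathbf{G}_n)$ coincide and reduce to $-U = (G_n \setminus U)^{-1}$, matching the cyclicity of $\mathbf{A}_n$, and the monoid operation is just the complex product of subsets of $G_n$.

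In any such embedding $h\colon \mathbf{A}_n \hookrightarrow \mathcal{Q}(\mathbf{G}_n)$ the extreme values are forced: $h(a_{-k}) = \{e\}$, $h(b_{-k}) = \varnothing$, $h(a_k) = G_n$, and $h(b_k) = G_n \setminus \{e\}$. For $n = 3$, as noted in~\cite{JS23}, I would take $\mathbf{G}_3 = \mathbb{Z}_7$ with $h(b_0) = \{1, 2, 4\}$ (the nonzero quadratic residues) and $h(a_0) = \{0\} \cup h(b_0)$; a direct check of the six pairwise lattice relations, six pairwise complex products, and the negations confirms that the image is a subalgebra isomorphic to $\mathbf{A}_3$.

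For general $n$ the strategy is to select $\mathbf{G}_n$ as a cyclic group $\mathbb{Z}_{m_n}$, or if necessary a product of cyclic groups as suggested by the section title, and to choose strictly nested chains $h(a_{-k}) \subsetneq \cdots \subsetneq h(a_k)$ and $h(b_{-k}) \subsetneq \cdots \subsetneq h(b_k)$ linked by the involution identities $h(b_{-i}) = -h(a_i)$ together with the lattice identities $h(a_i) \cup h(b_j) = h(a_j)$ for $j > i$. The precise recipe for the intermediate subsets is likely to depend on the parity of $n$, since the index $0$ is absent when $n$ is even, and the candidate subsets must simultaneously respect the chain structure and the antichain relationship between $h(a_i)$ and $h(b_j)$ when $j > i$.

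The main obstacle is the verification of the complex-product identities. The definition of $\mathbf{A}_n$ forces essentially every nontrivial intermediate product to collapse to either $a_k$ or $b_k$, which translates into demanding sumset identities in $\mathbf{G}_n$: for instance $h(a_i) + h(a_j) = G_n$ whenever neither index is $-k$, and $h(b_{-1}) + h(b_{-1}) = G_n \setminus \{e\}$ in the even case. Cauchy--Davenport-type bounds force the intermediate subsets to be substantial in size, while the nesting and involution constraints simultaneously restrict their shape; balancing these requirements is the delicate combinatorial core of the argument. Once a family of subsets meeting all these constraints is identified, the remainder of the proof reduces to a finite case check that $h$ preserves each operation, after which representability follows directly from Theorem~\ref{thm: pre-group_representable_DqRAs}.
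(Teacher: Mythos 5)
Your overall strategy is exactly the paper's: reduce to Theorem~\ref{thm: pre-group_representable_DqRAs} by embedding $\mathbf{A}_n$ into $\mathcal{Q}(\mathbf{G}_n)$ for a finite abelian group $\mathbf{G}_n$ with $x^{\ell}=x^{r}=x^{\neg}=x^{-1}$, and your $n=3$ case coincides with the paper's (and with~\cite{JS23}): $\mathbf{Z}_7$ with $h(b_0)=\{1,2,4\}$ the nonzero quadratic residues and $h(a_0)=h(b_0)\cup\{0\}$. That part is fine.

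The gap is that for $n\geqslant 4$ you never actually produce the group or the subsets; you only list the constraints (nesting, involution compatibility, sumset identities) and acknowledge that satisfying them simultaneously is ``the delicate combinatorial core of the argument.'' That core is precisely the content of the proof, so leaving it unresolved means the theorem is not proved for any $n\geqslant 4$. The paper's resolution is concrete: take $\mathbf{G}_n=\mathbb{Z}_{n-2}\times\mathbb{Z}_7$, start the $b$-chain from the set $T=\{(0,1),(0,2),(0,4)\}\cup\{(m,\ell)\mid 1\leqslant m\leqslant n-3,\ \ell\in\{3,5,6\}\}$ (quadratic residues of $7$ over the zero coset, non-residues over all other cosets), grow the chain by adjoining the elements $(m,0)$ one at a time in a prescribed order (with the order split differently according to the parity of $n$), and obtain each $U_i$ from $V_i$ by adding the identity $(0,0)$. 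This choice makes the collapse identities such as $h(a_i)+h(a_j)=G_n$ and $h(b_i)+h(b_j)=G_n\setminus\{(0,0)\}$ verifiable by direct computation, because sums of residue and non-residue classes in $\mathbb{Z}_7$ already cover everything needed in the second coordinate. A secondary quibble: your claim that any embedding \emph{forces} $h(a_{-k})=\{e\}$, $h(a_k)=G_n$, etc.\ is not justified (a subalgebra of $\mathcal{Q}(\mathbf{G}_n)$ need not contain $G_n$ as an element a priori); it is harmless here only because you are free to choose those values, but it should not be presented as a necessity.
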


\begin{proof}
First consider the algebra $\mathbf{A}_3 = D_{4,8}^6$ in Figure~\ref{fig:An_for_n=3,4} and the ortho pregroup $\mathbf{Z}_7=\langle \mathbb{Z}_7, =, +,0,  -, -, -\rangle$. 
The algebra $\mathbf{A}_3$ is isomorphic to a subalgebra of the upset algebra $\mathcal{Q}(\mathbf{Z}_7)$. Let $V_{-1} = \varnothing$, $V_0 = \{1, 2, 4\}$, $V_1 = \mathbb{Z}_7{\setminus} \{0\}$, $U_{-1} = \{0\}$, $U_0 = V_0\cup \{0\} = \{0,1, 2, 4\}$ and $U_1 = \mathbb{Z}_7$. Define a map $\psi: A_n \to \textsf{Up}(\mathbb{Z}_7, =)$ by setting, for all $a_i, b_j \in A_n$,  $\psi(a_i) = U_i$ and $\psi(b_j) = V_j$. Then $\psi$ is an embedding from $\mathbf{A}_3$ into $\mathcal{Q}(\mathbf{Z}_7)$. 

Now let $n \geqslant 4$ and consider the ortho pregroup $$\mathbf{Z}_{n-2}\times \mathbf{Z}_7 =\langle\mathbb{Z}_{n-2}\times \mathbb{Z}_7, =, +,(0,0), -,-,-\rangle.$$ Assume first $n = 2k$ where $k \geqslant 2$. Set $$T = \{(0,1), (0,2), (0,4)\} \cup \{(m, \ell) \mid 1 \leqslant m \leqslant n-3, \ell\in\{3, 5, 6\}\}.$$  For $j \in \{0, 1\ldots, k-1\}$, define 
$$
V_{-k+j} = 
\begin{cases} 
\varnothing & \text{if }\, j = 0 \\ 
T  & \text{if }\, j = 1\\ 
V_{-k+1} \cup \{(m, 0)\mid 1\leqslant m \leqslant j-1\} & \text{if }\, 2 \leqslant j \leqslant k-1
\end{cases}
$$
and, for $j \in \{0, 1, \ldots, k-1\}$, define 
$$
V_{k-j} = 
\begin{cases}
V_{-1} \cup \{(m, 0)\mid k-1\leqslant m \leqslant 2k-2-j\} & \text{if } \, 1\leqslant j \leqslant k-1\\
\mathbb{Z}_{n-2}\times\mathbb{Z}_7 {\setminus} \{(0,0)\} & \text{if } j= 0.
\end{cases}
$$
Finally, for $i \in \{-k, \ldots, -1, 1, \ldots, k\}$, define 
$
U_{i} = V_{i} \cup \{(0, 0)\}.
$
%
%

Now assume $n = 2k+ 1$ where $k \geqslant 2$. For $j \in \{0, 1\ldots, k\}$, define 
$$
V_{-k+j} = 
\begin{cases} 
\varnothing & \text{if }\, j = 0 \\ 
T  & \text{if }\, j = 1\\ 
V_{-k+1} \cup \{(m, 0)\mid 1\leqslant m \leqslant j-1\} & \text{if }\, 2 \leqslant j \leqslant k
\end{cases}
$$
and, for $j \in \{0, 1, \ldots, k-1\}$, define 
$$
V_{k-j} = 
\begin{cases}
V_{0} \cup \{(m, 0)\mid k\leqslant m \leqslant 2k-1-j\} & \text{if } \, 1\leqslant j \leqslant k-1\\
\mathbb{Z}_{n-2}\times\mathbb{Z}_7 {\setminus} \{(0,0)\} & \text{if } j= 0.
\end{cases}
$$
For $i \in \{-k, \ldots, -1, 0, 1, \ldots, k\}$, define 
$
U_{i} = V_{i} \cup \{(0, 0)\}.
$
In both cases it can be shown that the map $\psi: A_n \to \textsf{Up}(\mathbb{Z}_{n-2} \times\mathbb{Z}_7, =)$ defined by   $\psi(a_i) = U_i$ and $\psi(b_j) = V_j$ for all $a_i, b_j \in A_n$ is an embedding from $\mathbf{A}_n$ into $\mathcal{Q}(\mathbf{Z}_{n-2}\times \mathbf{Z}_7)$. 
\qed
\end{proof}

\section{Future work}\label{sec:future}

The results in this paper provide an opportunity to further explore representability of DInFL-algebras and DqRAs along two interesting paths.


The first of these paths  is to consider a partially-defined operation on the underlying pregroup. 
J\'{o}nsson and Tarski~\cite[Section 5]{JT52} observe that the complex algebras of Brandt groupoids are relation algebras and that there is a one-to-one correspondence between representable relation algebras and subalgebras of the complex algebras of Brandt groupoids~\cite[Theorem 5.8]{JT52}. 
Jipsen~\cite[Section 4]{Jip17} described  \emph{partially ordered groupoids} and showed that the collection of downsets can be equipped with additional algebraic structure. We conjecture that there is an appropriate  definition of \emph{pregroupoids}, where the binary operation would be partially defined, and that these structures would provide a fruitful setting for investigating representability of DInFL-algebras. 

Currently, it is not known whether there are non-representable DInFL-algebras or DqRAs. The smallest known algebra without a representation is the three-element MV-chain \L$_3$. The second path we wish to explore is to modify the construction of Givant and Andr\'{e}ka~\cite{GA02,Giv18} which has been used to construct non-representable relation algebras via so-called \emph{coset relation algebras}~\cite{AG18,ANG20}. We hope to be able to use our representations via group-like structures to obtain non-representability results.


\end{document}